\newtheorem{theorem}{Theorem}
\newtheorem{definition}{Definition}
\newtheorem{example}{Example}[section]
\journal{Journal of \LaTeX\ Templates}
\begin{document}

\begin{frontmatter}

\title{LAQP: Learning-based Approximate Query Processing}


\author[M. Zhang et al]{Meifan Zhang, Hongzhi Wang\Envelope}
\address{Department of Computer Science and Technology, Harbin Institute of Technology}

\begin{abstract}
Querying on big data is a challenging task due to the rapid growth of data amount. Approximate query processing (AQP) is a way to meet the requirement of fast response. In this paper, we propose a learning-based AQP method called the LAQP. The LAQP builds an error model learned from the historical queries to predict the sampling-based estimation error of each new query. It makes a combination of the sampling-based AQP, the pre-computed aggregations and the learned error model to provide high-accurate query estimations with a small off-line sample. The experimental results indicate that our LAQP outperforms the sampling-based AQP, the pre-aggregation-based AQP and the most recent learning-based AQP method.
\end{abstract}

\begin{keyword}
Approximate Query Processing\sep Pre-computed aggregation\sep Sampling\sep Machine Learning
\end{keyword}

\end{frontmatter}

\section{Introduction}\label{sec:introduction}

It is a difficult task to obtain the exact query answers on big data. Even though sufficient hardware is available to conduct queries on big data, hours of response time is unacceptable to make real-time decisions~\cite{DBLP:journals/debu/MozafariN15, DBLP:conf/sigmod/PengZWP18}. Approximate query processing (AQP)~\cite{DBLP:journals/tods/ChaudhuriDN07, DBLP:conf/sigmod/ChaudhuriDK17} makes it possible to efficiently obtain approximate query results. The AQP has been studied for a long time. Different methods make different trade-offs among the accuracy, the response time, the space budget and the supported queries~\cite{DBLP:journals/dase/LiL18}. However, it is still challenging to achieve a satisfactory trade-off uniting all these aspects~\cite{DBLP:conf/sigmod/PengZWP18, DBLP:journals/dase/LiL18}.

Existing methods adopt two kinds of ideas.
The first one is the sampling-based AQP methods, i.e., the query results are estimated based on random samples. Online sampling methods collect samples in the process of conducting each query, which will increase the response time accordingly~\cite{DBLP:journals/dase/LiL18}. The off-line sample is collected before executing the queries, which leads to fast response. However, a small sample may not be adequately to represent the entire dataset~\cite{DBLP:journals/tods/ChaudhuriDN07, DBLP:conf/sigmod/LiHYLS08, DBLP:journals/jsw/InoueKG16}.
The second one is based on pre-computed synopses or aggregations, which are computed before query processing and used to estimate query results. However, these methods can only support some special queries. They cannot support queries as general as those supported by the sampling methods~\cite{DBLP:journals/dase/LiL18}. In addition, sufficient pre-computed aggregations will cost much space~\cite{DBLP:conf/sigmod/AgarwalMKTJMMS14}. A limited number of pre-computed aggregations or data cubes are difficult to provide sufficiently accurate estimation results, especially for the high-dimensional data.

Recently, some methods make efforts to adopt machine learning methods to solve the AQP problems~\cite{DBLP:conf/sigmod/MaT19, DBLP:conf/icde/OlmaPAA19}.
The main drawback of these methods is that the query estimation based on the learned model cannot provide a priori error guarantee as the sampling-based methods~\cite{DBLP:conf/sigmod/MaT19, DBLP:conf/icde/OlmaPAA19, DBLP:journals/corr/abs-1903-10000}. However, the error guarantee is an important metric to measure the quality of the AQP estimations~\cite{DBLP:conf/sigmod/AgarwalMKTJMMS14}.

Most of the existing methods have some shortcomings in different aspects including the low accuracy, unmeasured error, large space requirement, limited query support, and low efficiency.
Clearly, an ideal AQP approach achieves high accuracy of the estimations compared to the sampling-based AQP methods, provides error guarantee like the sampling-based method, costs little space, supports general queries and responds fast to the queries. Thus, to support general queries and provide the error guarantee, we take full advantage of the sampling-based method. Furthermore we also adopt the pre-computed aggregations to increase the accuracy according to the previous work AQP++~\cite{DBLP:conf/sigmod/PengZWP18}. 

The AQP++ estimates the new query based on its  `range-similar' pre-computed query, whose predicate range is the most similar to the given query.
It estimates a new query $q_{new}$ as follows.

AQP++:
\begin{displaymath}
  q_{new}=q_{old}+(\hat{q}_{new}-\hat{q}_{old})
\end{displaymath}
The estimation of $q_{new}$ is the sum of two parts. The first part is the exact result of the old query $q_{old}$. The second part is the difference between the new query and the old one estimated by sampling. The $\hat{q}_{new}$ and the $\hat{q}_{old}$ are the estimations of $q_{new}$ and $q_{old}$ based on sampling. The estimation accuracy depends on the accuracy of the second part, since the first part is a constant. If the $q_{new}$ and $q_{old}$ have similar predicate range, suggesting that, they are highly correlated, the new query estimated based on the old query is possible to be more reliable than the sampling-based estimation $\hat{q}_{new}$. Therefore, it estimates a new query based on its `range-similar' pre-computed query. That is the main idea of the AQP++.

Our LAQP use a strategy different from the AQP++ to find a proper pre-computed query for each new query. In order to introduce the idea of our method, we further refine the second part of the above equation as follows.
\begin{displaymath}
  \hat{q}_{new}-\hat{q}_{old}=[q_{new}-Error(q_{new})]+[q_{old}-Error(q_{old})]
\end{displaymath}

If (1) the estimation error of these two queries $Error(q_{new})=q_{new}-\hat{q}_{new}$ and $Error(q_{old})=q_{old}-\hat{q}_{old}$ are close to each other, and (2) the  $\hat{q}_{new}$ and the $\hat{q}_{old}$ are computed based on the same sample, estimating the new query based on the old query is still reliable even though the range predicates of the two queries are irrelevant. We call thus old query the `error-similar' query of the new query.

We proved that is likely to provide a more accurate estimation for a new query based on its `error-similar' pre-computed query. Assuming that we have a query log storing the true result of each per-computed query, the precondition for finding the `error-similar' aggregation is to predict the sampling-based error for each new query. We involve machine learning models in error prediction, since they are useful to predict the unknowns with the past observations. We calculate the estimations of the pre-computed queries based on a fixed sample. It is also easy to compute the estimation errors, since we know the true results of the pre-computed queries. Thus, a regression model can be learned by mapping each pre-computed query's predicate to its estimation error which is computed based on the fixed sample. We describe that model as \textit{mapping each query to its sampling-based estimation error} for brevity in the rest of the paper.  


In this way, we unit the advantages of the sampling-based AQP method, the pre-computed aggregations and the machine learning. At the same time, we avoid the shortcomings of them including the high space cost of a sufficient sample, and the lack of error guarantee for the predictions of a machine learning model.

For avoiding too much space cost, we use a small off-line sample to estimate all the queries. This sample is clearly insufficient to provide an accurate estimation. With the consideration that the error model learned from the query log can predict the sampling-based error of each query, we measure the quality of the small sample with such a model.  Meaning that, we can still give an accurate estimation of the difference between a given query and a pre-computed one according to their sampling-based estimations and errors. We proved that the accuracy of estimating their difference determines the accuracy of the final estimation result. In this way, we can provide a sufficiently accurate estimation with only a small sample.

For supporting an error guarantee, we use a machine learning regression model, such as the SVM, the RandomForest, and the ANN, to predict the sampling-based estimation error of a query instead of directly estimating the query result. The model can be tuned by mapping each pre-computed query to its sampling-based estimation error. In this way, the LAQP benefits from the regression model to choose an `error-similar' pre-computed query resulting in less error. Meanwhile, the estimation error can still be limited according to the statistical theorems. In addition, the fast prediction of the model will not cost much response time.



The framework of our LAQP is shown in Figure~\ref{Fig:Framework}. We randomly choose a small sample from the dataset. Note that this is the only sample we used in our LAQP. We then compute the estimation of each query in the given query log based on the small sample in order to measure the quality of the sample. We store the sampling-based estimations and the errors of estimations along with the results of queries in the log. We train a regression model mapping each query to its sampling-based estimation error. When a new query arrives, its sampling-based estimation error is predicted according to the error model. Its `error-similar' pre-computed query $Q_{opt}$ will then be used to estimate the new query. The final query result is the sum of the chosen pre-computed query $R_{opt}$ and the difference between the new query and the pre-computed one estimated by the same sample $(\hat{R}_{new}-\hat{R}_{opt})$. Since the difference is estimated based on the sample, the estimation error can be limited according to statistical theorems such as the Central Limit Theorem.

\begin{figure}
\centering
\includegraphics[scale=0.5]{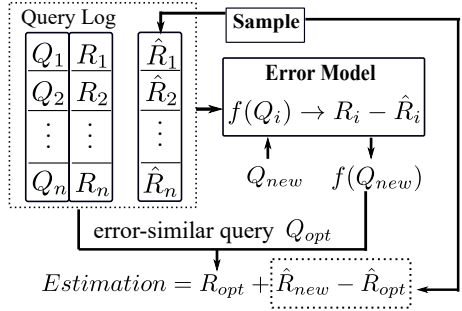}
\caption{The framework of LAQP}
\label{Fig:Framework}
\end{figure}

We make the following contributions in this paper.

(1) Our first contribution is the learning-based AQP method (LAQP). We build a regression model learned from the given query log to predict the sampling-based estimation error of each new query. To the best of our knowledge, this is the first work making a combination of the sampling-based AQP method, the pre-computed aggregations, and the regression model to increase the accuracy, limit estimation error, support general queries while occupying little space. 

(2) Our second contribution is the error analysis of the estimations and two extensions of the LAQP benefiting from the diversification and the optimization. Our LAQP can provide an error guarantee for each estimation according to the statistical theorems. The diversification and optimization methods can be applied to our LAQP to improve its performance.

(3) Our third contribution is the extensive experiments. We compare the performance of our method with some existing representative approaches including a sampling-based AQP method, a pre-aggregation-based method and a recent learning-based AQP method called DBEst. The experimental results indicate the advantages of our method over the existing ones.

\section{Related works}
The sampling-based AQP methods are widely used, due to its efficiency and universality. 
Usually, there is a tradeoff between the efficiency and the accuracy~\cite{DBLP:conf/cidr/SidirourgosKB11, DBLP:conf/sigmod/PengZWP18, DBLP:journals/dase/LiL18}. Samples in small size respond fast at the cost of reducing the accuracy. Many works make efforts to find a representative sample in a small size to improve accuracy without reducing the efficiency~\cite{DBLP:conf/cidr/SidirourgosKB11, DBLP:conf/eurosys/AgarwalMPMMS13, DBLP:journals/tods/ChaudhuriDN07}. The main drawback of random sampling is that the accuracy decreases with the variance of the aggregated attribute values. That is, random sampling cannot provide sufficiently accurate estimations for the attributes with high-skew distribution. Stratified sampling~\cite{DBLP:conf/sigmod/AcharyaGP00, DBLP:conf/vldb/GantiLR00, DBLP:journals/tods/ChaudhuriDN07, DBLP:conf/cidr/SidirourgosKB11, DBLP:conf/eurosys/AgarwalMPMMS13} is a way to solve this problem. The probability of stratified sampling is related to the importance or contribution to the aggregation result. However, the stratified sampling usually relies on a prior knowledge of the distribution. 

Making use of some pre-computed synopses and aggregations is another way to solve this problem. Synopses such as histograms~\cite{DBLP:journals/tods/GibbonsMP02}, sketches~\cite{DBLP:conf/sigmod/RoyKA16}, and wavelets \cite{DBLP:conf/sigmod/GarofalakisG02} can be computed according to the query workload before executing queries. Pre-computed aggregations are also based on the given query workload. Approximate Pre-Aggregation~\cite{DBLP:conf/vldb/Jermaine03} (APA) uses a random sample combined with a small set of statistics about the data to increase the accuracy. There are some methods storing some data cubes as pre-aggregations to improve the accuracy~\cite{DBLP:conf/vldb/Dyreson96, DBLP:conf/sigmod/MumickQM97}. These method cannot support queries as general as those supported by the sampling-based AQP methods. There are some other methods combining the pre-aggregations with sampling~\cite{DBLP:conf/dmkd/JermaineM00, DBLP:conf/sigmod/LiHYLS08, DBLP:conf/icde/KamatJTN14, DBLP:journals/tkdd/KamatN18, DBLP:conf/sigmod/PengZWP18}. The authors of reference ~\cite{DBLP:conf/sigmod/PengZWP18} proposed the AQP++ estimating the query result based on the pre-computed data cubes and the sampling-based AQP.

Machine learning methods have been used in the field of data processing and data analysis in recent years~\cite{DBLP:conf/sigmod/KraskaBCDP18, DBLP:journals/corr/abs-1903-10000}. There are some new AQP methods adopt machine learning methods~\cite{DBLP:conf/sigmod/MaT19, DBLP:conf/icde/OlmaPAA19}. In reference~\cite{DBLP:conf/sigmod/MaT19}, the authors proposed an AQP method learning a density model and a regression model from a small sample. Some researches use deep generative models to learn the data distribution and generate samples for the AQP~\cite{DBLP:journals/corr/abs-1903-10000}. The main problem of using machine learning methods to approximate query result is that it currently does not provide a priori error guarantee like the sampling-based AQP methods.

Most of the existing methods have some shortcomings in different aspects including the low accuracy, limited query support, the storage of sufficient samples or pre-computed information and the low efficiency. Our LAQP supports most of the typical queries supported by the sampling-based AQP method. In the meanwhile, it aims at increasing the accuracy with a small sample. 

\section{Preliminaries}\label{section:preliminaries}
In this section, we introduce two representative AQP methods, i.e., the sampling-based AQP and the  pre-aggregation-based AQP.

\subsection{Sampling-based AQP}\label{section:SAQP}
Assuming that there is an aggregation query in the following form.

$q: $ SELECT $SUM(A)$ from $D$ WHERE $C$.

A sampling-based AQP method first randomly chooses a subset $S$ from the dataset $D$. The query approximation including the estimation result and the confidence interval are computed based on the query results on the sample $S$ and the Central Limit Theorem.
\begin{displaymath}
  EST(q)=\frac{|D|}{|S|} SUM(S_{C}(A))\pm\lambda\sqrt{\frac{var(S_{C}(A))}{|S|}}
\end{displaymath}

The $S_{C}$ means the tuples in the sample $S$ matching the predicate $C$. 

\subsection{Pre-aggregation-based AQP}
The pre-aggregation-based AQP estimates a new query based on a pre-computed query. The final estimation result $est(q)$ is the sum of the pre-computed aggregation $Pre(Q)$ of query $Q$ and the estimated difference $(EST(q)-EST(Q))$.
\begin{displaymath}
  est(q)=Pre(Q)+(EST(q)-EST(Q)),
\end{displaymath}
where $EST(q)$ and $EST(Q)$ are the estimated results of the query $q$ and $Q$ based on the method in Section~\ref{section:SAQP}.

\section{LAQP}\label{section:LAQP}
We will introduce the framework of our LAQP in Section~\ref{section:LAQPframework}. In Section~\ref{section:LAQPguarantee}, we analyze the estimation error.

\subsection{The framework}\label{section:LAQPframework}

In this section, we will introduce the framework of our LAQP.
LAQP randomly chooses a small sample from the data, and estimates the queries in the given log based on the sample. We then train a regression model from the query log to predict the sampling-based error for the given query. For each query, we search the log for its `error-similar' pre-computed query resulting in less estimation error. At last, the query result is estimated according to the sum of the chosen pre-computed query result and the difference between the new query and the pre-computed one. The estimation error can be limited according to the statistical theorems such as CLT, and Hoeffding bounds.

As introduced in Section 1, the major tasks of LAQP include providing the error guarantee by statistics, saving the space cost by maintaining a small off-line sample, and increasing the accuracy by making use of the pre-aggregations and a regression model. We will introduce how our LAQP achieves these tasks in detail as follows.

The first task is to measure the error of each estimation. Our idea of the LAQP starts from the combination of the pre-computed aggregations and the sampling-based AQP, which estimates a new query according to a pre-computed query result and the difference between the new query and the old one. The accuracy of the difference estimated by sampling determines the accuracy of the query estimation. The estimation error is limited according to the statistical theorems.

 Our second task is to narrow down the space cost. We store a small off-line sample in the memory, since it costs little and responds fast. However, choosing a good sample in a limited size is a difficult task. If the distribution of the data is skew or the number of dimensions increases to a large number, the distribution of a small sample has little chance to be the same with that of the entire data. Instead of finding a perfect sample, we would like to keep a small static sample and measure the quality of that sample according to the pre-computed results in the query logs.

The third task is to improve the accuracy. The accuracy of the pre-aggregation-based AQP method depends on the accuracy of estimating the difference between the new query and a proper pre-computed one as we mentioned before. We come up with a new idea of choosing the `error-similar' pre-computed query instead of the `range-similar' pre-computed one. We will prove that this idea leads to a more accurate estimation. The motivation of this idea is composed of two parts. First, the assumption that the estimation errors of two `range-similar' queries are similar is not quite reliable for the high-skew data. Even though we can find a `range-similar' pre-computed one, is that the optimal choice? There may still exist another `error-similar' pre-computed one resulting in the higher accuracy. Second, how to choose a proper one from a small number of pre-computed queries whose predicate ranges are all not sufficiently similar to the new query? The probability of choosing a sufficiently similar predicate is not high, because of the increasing dimensions and the limited number of the pre-computed queries. In that situation, the `error-similar' query is possible to be a good choice.


The remaining problem is how to compute the sampling-based estimation error of a new query before execution. The previous method estimates a new query based on the `range-similar' pre-computed query, since the predicate range is the only information of the new query while the sampling-based error of a new query is unknown. We make use of a machine learning approach to solve this problem. The assumption of our method is that the pre-computed queries and their true results are available in the query log. We compute the estimations of the pre-computed queries in the given query log based on the same sample, and learn a regression model mapping each query to the difference between its true result and its estimation. We call the difference the sampling-based estimation error. The predictive ability of the regression model makes it possible to predict the estimation error of a large number of new queries while occupying a little space. Furthermore, its ability of handling multi-dimensional data benefits the multi-dimensional AQP. In addition, predicting an error with a simple model is efficient, since the machine learning models are widely used in real-time decision making.

\begin{algorithm}
\caption{LAQP-ModelConstruction}\label{algorithm:LAQP-ModelConstruction}
\fontsize{8pt}{0}
\textbf{Input: QueryLog $QL=\{[Q_1,R_1], [Q_2,R_2], ..., [Q_n,R_n]\}$, DATA $D$ }

\textbf{Output: Sample $S$, Error model $f$}
\begin{algorithmic}[1]
\State $S\leftarrow$ Uniform Random Sample from $D$
\For {$Q_i$ in $QL$}
    \State $\hat{R}_i = SAQP(Q_i,S)$
\EndFor
\State Training Model $f: Q_i\rightarrow R_i-\hat{R}_i$
\end{algorithmic}
\end{algorithm}

\begin{algorithm}
\caption{LAQP-Estimation}\label{algorithm:LAQP-Estimation}
\fontsize{8pt}{0}
\textbf{Input: QueryLog $QL$, Error model $f$, Sample $S$, New query $q$ }

\textbf{Output: Estimation $est$}
\begin{algorithmic}[1]
\State $PredictedError\leftarrow f(q)$
\State $opt=\arg\min_i |(R_i-\hat{R}_i)-PredictedError|$
\State $est=R_{opt}+SAQP(q,S)-SAQP(Q_{opt},S)$
\end{algorithmic}
\end{algorithm}

Our LAQP is composed of two parts, i.e., error model learning, and query estimation. The pseudo-code of them are shown in Algorithm~\ref{algorithm:LAQP-ModelConstruction} and Algorithm~\ref{algorithm:LAQP-Estimation}, respectively.

The first part is to construct the error model. It first randomly choose a small sample $S$ from the dataset (Line 1). The sample is then used to compute the estimation of each query $Q$ in the QueryLog according to the simplest SAQP (Line 2-4). $SAQP(Q_i, S)$ denotes the approximate query result of $Q_i$ estimated with the sample $S$ based on the method introduced in Section~\ref{section:SAQP} . The algorithm then trains an error model $f$ mapping each query in the query log to its estimation error (Line 5). In the implementation, we train one model for one kind of aggregation query. For example, the model of the sum queries like \textit{Q: select sum(X) from D, where $l_A\le A \le r_A$ and $l_B\le B \le r_B$} can be formed as $f_{sum}: (l_A,r_A,l_B,r_B)\rightarrow sum(X)-\hat{sum}(X)$. For brevity, we uniformly represent the model as the form in the pseudo-code.

The second part is to estimate the result of a new query $q$. First, the error model predicts the error $f(q)$ of the new query (Line 1). We regard a query $Q_i$, whose estimation error $(R_i-\hat{R}_i)$ is the closest to the predicted error $f(q)$ of the new query, as the baseline pre-computed aggregation (Line 2). The final estimation result is calculated by summing the pre-computed result $R_{opt}$ and the estimated difference $SAQP(q, S) - SAQP(Q_{opt}, S)$ between the new query $q$ and the pre-computed $Q_j$ (Line 3).

We show the process of our algorithm with the following example.

\begin{table}
\centering
\begin{tabular}{|c|c|c|c|c|c|c|}
\hline
Price&120&195&200&210&250&280\\
\hline
Count&75&10&15&20&15&65\\
\hline
Sample&4&1&1&1&1&2\\
\hline
\end{tabular}
\caption{Information in an order list.}\label{Table:Example1a}
\end{table}

\begin{table}
\centering
\begin{tabular}{|c|c|c|c|c|}
\hline
&Range&True&Est&Error\\
\hline
$Q_1$&$[100,200]$&100&120&$+20$\\
\hline
$Q_2$&$[201,300]$&100&80&$-20$\\
\hline
$Q_{new}$&$[190,270]$&60&80&+16(predicted)\\
\hline
\end{tabular}
\caption{Estimate new query based on the query log.}\label{Table:Example1b}
\end{table}

\begin{example}
Table~\ref{Table:Example1a} shows the $Price$ and $Count$ of 200 items, and we choose 10 items from the entire list to estimate the queries.
Suppose the following two queries are in the query log:

$Q_1$:
\begin{minipage}[t]{0.9\linewidth}
   select count(item) from order,\\ where price between 100 and 200;\\
\end{minipage}

$Q_2$:
\begin{minipage}[t]{0.9\linewidth}
   select count(item) from order,\\ where price between 201 and 300;\\
\end{minipage}

The true query results, the sampling-based estimated results, and the estimation errors are shown in Table~\ref{Table:Example1b}. Assuming that an error model has already been well tuned based on the information in the query log. Currently, the task is to estimate the following new query:

$Q_{new}$:
\begin{minipage}[t]{0.9\linewidth}
 select count(item) from order,\\ where price between 190 and 270;\\
\end{minipage}

We can learn from the figure that the predicted error of the new query is more similar to the error of $Q_1$, suggesting that, estimating the result based on $Q_1$ is better than $Q_2$. We compute the estimations based on $Q_1$ and $Q_2$, and compare them with the true result to verify the idea of our algorithm. \\
Estimation result based on $Q_1$: $R_1=100+80-120=60$.\\
Estimation result based on $Q_2$: $R_2=100+80-80=100$.

The estimation based on $Q_1$ is more accurate, even though the predicate range of the new query is more similar to $Q_2$.
\end{example}

As shown in the example, estimating a query based on an error-similar pre-computed query is possible to provide a more accurate estimation. In the next section, we will prove that the estimation error of our LAQP method is limited.

\subsection{Error Analysis}\label{section:LAQPguarantee}

In this section, we describe the query estimation in LAQP and prove that the estimation error is limited. At last, we discuss the impacts of the estimation accuracy.

We first define the estimation of a query $q$.

\begin{definition}\label{def:estimation}
  The estimation of a query $q$ based on a pre-computed $q_i$ is defined as $est(q)$, and calculated with the following equation.
  \begin{align}
    est(q)=R(q_i)+EST(q)-EST(q_i)
  \end{align}
\end{definition}

Then, we prove that the estimation $est(q)$ in Definition~\ref{def:estimation} is the unbiased estimation of query $q$.

\begin{theorem}\label{theorem:UnbiansedEstimation}
    $est(q)$ is the unbiased estimation of $R(q)$, i.e., $\mathbb{E}[est(q)]=R(q)$.
\end{theorem}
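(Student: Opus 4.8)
The plan is to show that $est(q)$ is unbiased by taking the expectation of the defining equation and exploiting linearity together with the unbiasedness of the underlying sampling-based estimator $EST(\cdot)$. First I would start from Definition~\ref{def:estimation}, writing
\begin{align}
\mathbb{E}[est(q)] = \mathbb{E}[R(q_i) + EST(q) - EST(q_i)].
\end{align}
The key observation is that $R(q_i)$ is the \emph{true} (pre-computed) result of the query $q_i$, hence a deterministic constant with respect to the random sample $S$; only the two terms $EST(q)$ and $EST(q_i)$ are random. Applying linearity of expectation therefore gives $\mathbb{E}[est(q)] = R(q_i) + \mathbb{E}[EST(q)] - \mathbb{E}[EST(q_i)]$.

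The central step is to establish that the sampling-based estimator is itself unbiased, i.e. $\mathbb{E}[EST(q)] = R(q)$ for any query. This follows from the form of the estimator in Section~\ref{section:SAQP}: the point estimate is $\frac{|D|}{|S|}\,SUM(S_C(A))$, and since $S$ is a uniform random sample from $D$, each tuple of $D$ satisfying the predicate $C$ is included with probability $|S|/|D|$. A standard Horvitz--Thompson / scaling argument then shows that the scaled sample sum has expectation equal to the full-population sum $R(q) = SUM(D_C(A))$; I would phrase this as $\mathbb{E}\!\left[\frac{|D|}{|S|} SUM(S_C(A))\right] = \frac{|D|}{|S|}\cdot\frac{|S|}{|D|}\,SUM(D_C(A)) = R(q)$. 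Applying the same identity to $q_i$ yields $\mathbb{E}[EST(q_i)] = R(q_i)$.

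Substituting both back gives
\begin{align}
\mathbb{E}[est(q)] = R(q_i) + R(q) - R(q_i) = R(q),
\end{align}
which is exactly the claim, since the two $R(q_i)$ terms cancel. The upshot is that choosing the pre-computed query $q_i$ (whether by range-similarity or by the error-similarity criterion that LAQP advocates) does not introduce any bias: regardless of which $q_i$ is picked, the constant it contributes is cancelled in expectation by the $-EST(q_i)$ term, so the choice of $q_i$ affects only the \emph{variance}, not the mean. The main obstacle is the middle step: I would want to state precisely the sampling scheme (uniform sampling, with or without replacement) and confirm that the estimator $EST$ is genuinely unbiased under that scheme, since the whole argument collapses to the cancellation of two constants once $\mathbb{E}[EST(\cdot)] = R(\cdot)$ is in hand. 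Everything else is linearity of expectation.
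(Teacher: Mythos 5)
Your proposal is correct and takes essentially the same route as the paper's proof: expand $est(q)$ by Definition~\ref{def:estimation}, apply linearity of expectation, and cancel $R(q_i)$ against $\mathbb{E}[EST(q_i)]$ to conclude $\mathbb{E}[est(q)]=\mathbb{E}[EST(q)]=R(q)$. The only difference is that you explicitly justify the key identity $\mathbb{E}[EST(\cdot)]=R(\cdot)$ with an inclusion-probability (Horvitz--Thompson) argument for the estimator $\frac{|D|}{|S|}SUM(S_C(A))$, whereas the paper simply asserts it (``since $EST(q_i)$ is the estimated by sampling''), so your write-up is the more complete version of the same argument.
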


\begin{proof}
\begin{align}
\mathbb{E}[est(q)]&=\mathbb{E}[R(q_i)+EST(q)-EST(q_i)]\\
                  &=\mathbb{E}[EST(q)]+\mathbb{E}[R(q_i)]-\mathbb{E}[EST(q_i)]
\end{align}
Since $EST(q_i)$ is the estimated by sampling,
\begin{align}
    \mathbb{E}[R(q_i)]=\mathbb{E}[EST(q_i)]
\end{align}
Consequently,
\begin{align}
    \mathbb{E}[est(q)]=\mathbb{E}[EST(q)]=\mathbb{E}[R(q)]=R(q)
\end{align}

 That is, $est(q)$ is the unbiased estimation of $R(q)$.
\end{proof}

The estimation error $R(q)-est(q)$ can be bounded according to the following theorem.
\begin{theorem}\label{theorem:error}
 $ \Pr[R(q)-est(q)>\delta\cdot R(q)]\le e^{-\delta^2\cdot R(q)/2}$, $\delta\in(0,1)$.
\end{theorem}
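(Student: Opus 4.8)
The plan is to recognize the bound $e^{-\delta^2 R(q)/2}$ as the standard multiplicative Chernoff bound for the lower tail of a sum of independent bounded random variables, and to reduce $est(q)$ to such a sum. First I would use the structure of Definition~\ref{def:estimation}: since $R(q_i)$ is the exact pre-computed result of $q_i$, it is a deterministic constant, so all the randomness in $est(q)$ lives in the sampling estimates $EST(q)-EST(q_i)$, both computed from the \emph{same} sample $S$. Rewriting the target event as $\{est(q) < (1-\delta)R(q)\}$, and invoking Theorem~\ref{theorem:UnbiansedEstimation} to fix the mean at $\mathbb{E}[est(q)] = R(q)$, the goal becomes a one-sided concentration statement for $est(q)$ around its own mean $R(q)$.

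Next I would make the sample-based estimator explicit as a sum. Modelling $S$ as a uniform sample in which each tuple $t \in D$ is included independently, I would introduce inclusion indicators $Z_t$ and, for a COUNT query, per-tuple matching indicators $a_t \in \{0,1\}$, so that $EST(q) = \frac{|D|}{|S|}\sum_t a_t Z_t$ and, by linearity, $\mathbb{E}[EST(q)] = R(q)$. The difference $EST(q)-EST(q_i)$ is then a scaled sum of independent, bounded per-tuple terms, and after the deterministic shift by $R(q_i)$ the quantity $est(q)$ is an affine image of a sum of independent bounded variables with mean $R(q)$.

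With this representation in hand I would run the Chernoff recipe directly: bound $\Pr[est(q) < (1-\delta)R(q)]$ by $\inf_{s>0} e^{s(1-\delta)R(q)}\,\mathbb{E}[e^{-s\,est(q)}]$, factor the moment-generating function across the independent terms, apply the elementary per-indicator estimate $\mathbb{E}[e^{-sX}] \le \exp(\mu_X(e^{-s}-1))$, and optimize over the free parameter $s$. For $\delta \in (0,1)$ this optimization yields exactly the multiplicative lower-tail form $e^{-\delta^2 R(q)/2}$, which is precisely the regime of $\delta$ named in the statement and where the quadratic exponent is valid.

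The hard part will be the second step: turning the scaled, same-sample difference $EST(q)-EST(q_i)$ into a clean sum of independent $[0,1]$-bounded variables whose total expectation is exactly $R(q)$. The scaling factor $\frac{|D|}{|S|}$ must be absorbed so that the exponent emerges as $R(q)/2$ rather than carrying a residual sample-fraction factor, and because $EST(q)$ and $EST(q_i)$ reuse the same draws their difference is not a difference of two independent sums but a single sum over shared inclusion indicators; I would therefore combine the two predicates into one net per-tuple contribution before invoking the bound. I would also need the per-tuple contributions bounded (normalized to $[0,1]$) for the Chernoff–Hoeffding step to apply, which the COUNT case makes automatic but a general SUM query would require as an explicit assumption.
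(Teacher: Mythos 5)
Your proof skeleton---rewriting the event as $\{est(q)<(1-\delta)R(q)\}$, pinning the mean at $R(q)$ via Theorem~\ref{theorem:UnbiansedEstimation}, and invoking the multiplicative Chernoff lower-tail bound---is precisely the paper's entire proof: the paper does nothing beyond stating $\Pr[est(q)<(1-\delta)\mathbb{E}[est(q)]]\le e^{-\delta^2\mathbb{E}[est(q)]/2}$ and substituting $\mathbb{E}[est(q)]=R(q)$. Everything you add (the inclusion-indicator decomposition, the MGF optimization) is an attempt to verify the hypotheses of that bound, which the paper never checks.

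The problem is that the step you yourself flag as ``the hard part'' is not merely hard---it cannot be completed, so your program, carried out honestly, does not reach the stated exponent. Write $est(q)-R(q_i)=\frac{|D|}{|S|}\sum_t (a_t-b_t)Z_t$ as you propose. First, the net contributions $a_t-b_t$ take values in $\{-1,0,1\}$, so even after the deterministic shift $est(q)$ is not a sum of independent $[0,1]$-valued variables, and the multiplicative lower-tail bound (whose derivation uses nonnegative summands in the step $\mathbb{E}[e^{-sX_i}]\le\exp(\mu_i(e^{-s}-1))$) does not apply; moreover an affine shift does not commute with multiplicative deviations, so a tail bound at $(1-\delta)$ times the mean of the \emph{sum} is not a tail bound at $(1-\delta)\mathbb{E}[est(q)]$. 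Second, and fatally, the factor $|D|/|S|$ cannot be ``absorbed'': the Chernoff--Hoeffding machinery requires summands normalized to $[0,1]$, and that normalization divides the relevant mean by $|D|/|S|$, so the exponent that actually comes out is of order $\delta^2\cdot\frac{|S|}{|D|}\cdot R(q)/2$---concentration governed by the expected number of \emph{sampled} tuples matching the predicate---not $\delta^2\cdot R(q)/2$. This is not an artifact of your particular reduction: a statistic computed from $|S|$ sampled tuples cannot concentrate at a rate that improves with the true answer $R(q)$ independently of the sample size, so no argument of this type yields the theorem's exponent. In short, your route is the paper's route made rigorous, and making it rigorous exposes that the paper's two-line proof applies the Bernoulli-sum Chernoff bound to $est(q)$ with its hypotheses unsatisfied; the residual sample-fraction factor you predicted would remain is exactly what the paper silently discards.
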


\begin{proof}
  By the Chernoff bound, for any $\delta\in(0,1)$,
  \begin{align}
    \Pr[est(q)<(1-\delta)\mathbb{E}[est(q)]]\le e^{-\delta^2\cdot \mathbb{E}[est(q)]]/2}.
  \end{align}

  Since $\mathbb{E}[est(q)]=R(q)$ according to Theorem~\ref{theorem:UnbiansedEstimation},

  \begin{displaymath}
    \Pr[R(q)-est(q)>\delta\cdot R(q)]\le e^{-\delta^2\cdot R(q)/2}.
  \end{displaymath}
  That is, the estimation error $R(q)-est(q)$ is limited.
\end{proof}

In this way, the estimation error of each query based on our LAQP method is limited.

In the above discussions, the $q_i$ could be any query in the query log $Q$. However, the selection of $q_i$ also influences the accuracy of the estimation. We will show the influences of the accuracy in the following Theorem~\ref{theorem:accuracy}

\begin{theorem}\label{theorem:accuracy}
 $ \min |R(q)-est(q)|=\min |PredictionError(q)+(f(q)-Error(q_i))| $, where the model $f$ maps each query $q_i$ in the query log $Q$ to its sampling-based estimation error.  $f: q_i \rightarrow Error(q_i)=R(q_i)-EST(q_i)$.
\end{theorem}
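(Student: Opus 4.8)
The plan is to prove this as an exact algebraic identity rather than a probabilistic statement: I will show that for \emph{every} choice of pre-computed query $q_i$ the quantity $R(q)-est(q)$ equals $PredictionError(q)+(f(q)-Error(q_i))$, after which the equality of the two minima (taken over all $q_i$ in the log $Q$) is immediate. No new concentration bounds are needed here, since the randomness of the sample was already absorbed into Theorems~\ref{theorem:UnbiansedEstimation} and~\ref{theorem:error}; this result is purely about \emph{which} $q_i$ the algorithm should pick.

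First I would substitute the definition of $est(q)$ from Definition~\ref{def:estimation} into $R(q)-est(q)$ and regroup the four terms as a difference of two sampling errors:
\begin{align}
R(q)-est(q)&=R(q)-R(q_i)-EST(q)+EST(q_i)\\
           &=[R(q)-EST(q)]-[R(q_i)-EST(q_i)]\\
           &=Error(q)-Error(q_i).
\end{align}
This is the crux of the decomposition: the overall estimation error of the new query equals the new query's own sampling error minus the sampling error of whichever pre-computed query is used as the baseline.

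Next I would inject the regression model. Since $f$ is trained to predict the sampling-based error, writing its residual as $PredictionError(q)=Error(q)-f(q)$ gives $Error(q)=PredictionError(q)+f(q)$. Substituting this into the identity above yields
\begin{align}
R(q)-est(q)=PredictionError(q)+(f(q)-Error(q_i)).
\end{align}
Taking absolute values and minimizing over all candidate $q_i\in Q$ on both sides produces the claimed equality; the term $PredictionError(q)$ is constant in $q_i$, so only $-Error(q_i)$ varies, which transparently explains why the algorithm chooses the $q_i$ whose stored error $Error(q_i)$ is closest to the predicted error $f(q)$.

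The only point demanding care, rather than a genuine obstacle, is the bookkeeping of sign conventions: $PredictionError(q)$ must be defined consistently with $Error$ (true result minus sampling estimate) so that the cancellation leaves exactly $f(q)-Error(q_i)$ inside the absolute value and isolates $PredictionError(q)$ as the irreducible, $q_i$-independent residual. Once this convention is fixed, the identity holds termwise for each $q_i$, and equality of the minima follows without any further estimate.
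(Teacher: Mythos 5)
Your proof is correct and follows essentially the same route as the paper's: expand $est(q)$ via Definition~\ref{def:estimation}, regroup into $[R(q)-EST(q)]-[R(q_i)-EST(q_i)]$, and substitute $Error(q)=f(q)+PredictionError(q)$, which is exactly the paper's sign convention $f(q)+PredictionError(q)=R(q)-EST(q)$. Your added remark that $PredictionError(q)$ is constant in $q_i$, so the minimization is driven solely by $|f(q)-Error(q_i)|$, is a small but welcome clarification that the paper leaves implicit.
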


\begin{proof}
Considering the prediction error of the model,
\begin{align}
f(q)+PredictionError(q)=R(q)-Est(q).
\end{align}

The estimation error can be calculated as,
\begin{align*}
|R(q)-est(q)|&=|R(q)-[R(q_i)+EST(q)-EST(q_i)]|\\
            &=|[R(q)-EST(q)]-[R(q_i)-EST(q_i)]|\\
            &=|[f(q)+PredictionError(q)]-Error(q_i)|\\
            &=|PredictionError(q)+(f(q)-Error(q_i))|               
\end{align*}
That is, the estimation accuracy depends on two parts. The first one is the model accuracy, the second is the similarity between the sampling-based estimation error of the chosen pre-computed query and the predicted error of the new query.
\end{proof}

The prediction error mostly depends on the reliability of the training data. Once the model is trained, the only thing to do is to find a pre-computed query in the log whose sampling-based estimation error is the most similar to the predicted error. Therefore, we finally define the estimator of our LAQP as follows.

\begin{definition}\label{def:LAQP}
  The estimation of a query $q$ based on the LAQP is defined as $est(q)$, and computed according to the following equation.
  \begin{align}
    est(q)=R(q_{opt})+EST(q)-EST(q_{opt}),
  \end{align}
  where
    $q_{opt}=\arg\min_{q_i} |f(q)-[R(q_i)-EST(q_i)]|$.
\end{definition}

As we discussed above, the estimation error of LAQP is limited, and the estimation accuracy depends on both the model accuracy and the error-similarity between the chosen pre-computed query and the new query. 

\subsection{Aggregation functions}
LAQP can support any typical aggregation functions supported by the sampling-based AQP method, such as the COUNT, SUM, AVG, STD and VAR. This point can be proved easily similar to the Lemma 1 in the reference~\cite{DBLP:conf/sigmod/PengZWP18}. LAQP is able to provide an error guarantee for these aggregation functions according to the Theorem~\ref{theorem:error}.

Since the queries involving MAX and MIN are very sensitive to rare large or small values~\cite{DBLP:conf/sigmod/AgarwalMKTJMMS14}, the sampling-based estimations are not reliable for them. Similarly, LAQP cannot provide the error guarantee for the queries involving the MAX/MIN aggregations without the explicit distribution of the data. These queries depend on the rank order of the tuples rather than their actual values~\cite{DBLP:conf/icde/ChaudhuriDMN01}. However, LAQP has more information besides the sample to rely on, since it learns the sampling-based error from the pre-computed queries. Therefore, it is possible to give better estimations of the MAX/MIN queries compared with the sampling-based method. The following theorems demonstrate that the estimations of the MAX/MIN queries based on the LAQP is possible to be more accurate than the sampling-based estimation. 

\begin{theorem}\label{theorem:MAX}
  The LAQP estimation $est(q)$ of a $MAX$ query $q$ is more accurate than the sampling-based estimation $EST(q)$ as long as there exist a pre-computed query $Q$ that $Error(Q)\le 2\cdot Error(q)$, where $Error(Q)$ and $Error(q)$ are the sampling-based errors of $Q$ and $q$, respectively.
\end{theorem}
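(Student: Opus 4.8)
The plan is to reduce everything to the single-query error identity already established for LAQP and then exploit the one structural feature that distinguishes MAX queries, namely that sampling can only underestimate. First I would specialize the error decomposition of Theorem~\ref{theorem:accuracy} (equivalently, substitute Definition~\ref{def:estimation} directly) to the case where $q$ is estimated from the fixed pre-computed query $Q$. This gives
\begin{align*}
|R(q)-est(q)| &= |[R(q)-EST(q)]-[R(Q)-EST(Q)]| \\
 &= |Error(q)-Error(Q)|,
\end{align*}
whereas the competing sampling-based error is simply $|R(q)-EST(q)|=|Error(q)|$. So the theorem reduces to comparing $|Error(q)-Error(Q)|$ against $|Error(q)|$.

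Next I would record the property that makes the MAX case tractable. Because the sample $S$ is a subset of $D$, the sample-max over the predicate-matching tuples never exceeds the true max, so $EST(q)\le R(q)$ and $EST(Q)\le R(Q)$. Hence both $Error(q)=R(q)-EST(q)\ge 0$ and $Error(Q)=R(Q)-EST(Q)\ge 0$. This non-negativity is the crucial ingredient; without it the two-sided control below would fail, which is exactly why the general error guarantee had to be handled separately for MAX.

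The core step is then a short algebraic fact: for any real $a\ge 0$, one has $|a-b|\le a$ if and only if $0\le b\le 2a$. Setting $a=Error(q)$ and $b=Error(Q)$, the hypothesis $Error(Q)\le 2\cdot Error(q)$ together with the automatic lower bound $Error(Q)\ge 0$ from the previous paragraph yields
\begin{displaymath}
|R(q)-est(q)|=|Error(q)-Error(Q)|\le Error(q)=|R(q)-EST(q)|,
\end{displaymath}
so the LAQP estimate built on $Q$ is at least as accurate as the sampling-based one. Finally, since the LAQP selection rule chooses $q_{opt}$ to minimize the error gap (Definition~\ref{def:LAQP}), the existence of a single favorable $Q$ in the log guarantees that the query actually used is no worse than $Q$, which closes the argument.

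The step I expect to be the main obstacle is this last one: reconciling ``at least as accurate'' with the word ``more accurate'', and with the fact that the algorithm selects $q_{opt}$ through the \emph{predicted} error $f(q)$ rather than the true $Error(q)$. I would handle the strictness by noting that the displayed inequality is strict whenever $0<Error(Q)<2\cdot Error(q)$, and I would handle the selection gap by invoking the $PredictionError$ term of Theorem~\ref{theorem:accuracy}: when the model is accurate, the minimizer of $|f(q)-Error(q_i)|$ essentially minimizes $|Error(q)-Error(q_i)|$, so a favorable $Q$ present in the log is effectively available to the estimator.
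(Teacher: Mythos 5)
Your proposal is correct and takes essentially the same route as the paper's own proof: both reduce the LAQP error to $|R(q)-est(q)|=|Error(q)-Error(Q)|$, invoke the fact that sampling under-estimates a MAX query so that $Error(q)\ge 0$ and $Error(Q)\ge 0$, and then show $|Error(q)-Error(Q)|\le Error(q)$ precisely when $Error(Q)\le 2\cdot Error(q)$ (the paper does this by a two-case analysis, you by the equivalent fact that for $a\ge 0$, $|a-b|\le a$ iff $0\le b\le 2a$). Your closing observations on strictness and on the gap between the true error and the model-predicted error $f(q)$ used for selection address points the paper's proof silently glosses over, but they do not change the substance of the argument.
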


\begin{proof}
 The estimation of $q$ based on LAQP is $R(Q)+EST(q)-EST(Q)$ according to the Theorem~\ref{theorem:UnbiansedEstimation}.
  Thus, the estimation error is:
  \begin{align*}
|R(q)-est(q)|&=|R(q)-[R(Q)+EST(q)-EST(Q)]|\\
            &=|[R(q)-EST(q)]-[R(Q)-EST(Q)]|\\
            &=|Error(q)- Error(Q)|
\end{align*}
The $Error(q)$ is the difference between the true result $R(q)$ and its estimation $EST(q)$ calculated based on the sample, i.e., $Error(q)=R(q)-EST(q)$.
It is clearly that the sampling-based estimation $EST(q)$ of a $MAX$ query is the under-estimation. Therefore, $|Error(q)|=Error(q)$. Consequently, $|Error(q)|\ge|R(q)-est(q)|$ can be true on the assumption that $Error(q)\ge|Error(q)- Error(Q)|$.

In the next step, we prove that $Error(q)\ge|Error(q)- Error(Q)|$ is true as long as $Error(Q)\le 2\cdot Error(q)$. (1) If $Error(q)\ge Error(Q)$, the assumption is always true since $Error(Q)\ge0$ for any $MAX$ query. (2) If $Error(q)< Error(Q)$, the assumption is true as long as $\frac{1}{2}\cdot Error(Q) \le Error(q)< Error(Q)$. Thus, the assumption $Error(q)\ge|Error(q)- Error(Q)|$ is true as long as $\frac{1}{2}\cdot Error(Q) \le Error(q)$.
\end{proof}

\begin{theorem}\label{theorem:MIN}
  The LAQP estimation $est(q)$ of a $MIN$ query $q$ is more accurate than the sampling-based estimation $EST(q)$ as long as there exist a pre-computed query $Q$ that $Error(Q)\ge 2\cdot Error(q)$, where $Error(Q)$ and $Error(q)$ are the sampling-based errors of $Q$ and $q$, respectively.
\end{theorem}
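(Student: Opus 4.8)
The plan is to mirror the proof of Theorem~\ref{theorem:MAX} almost verbatim, flipping the relevant inequalities to account for the opposite bias of the sampling estimator on $MIN$ queries. First I would invoke Theorem~\ref{theorem:UnbiansedEstimation} to write the LAQP estimator as $est(q)=R(Q)+EST(q)-EST(Q)$, and rewrite its error exactly as in the $MAX$ case:
\begin{align*}
|R(q)-est(q)| &= |[R(q)-EST(q)]-[R(Q)-EST(Q)]| \\
              &= |Error(q)-Error(Q)|.
\end{align*}
The goal then reduces to showing $|Error(q)|\ge|Error(q)-Error(Q)|$, which is precisely the assertion that $est(q)$ is at least as accurate as the bare sampling estimate $EST(q)$.

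The conceptual crux, and the single place where the argument genuinely differs from the $MAX$ proof, is the sign of the sampling bias. A sampling-based $MIN$ estimate can never drop below the true minimum, so $EST(q)\ge R(q)$ and hence $Error(q)=R(q)-EST(q)\le 0$; this yields $|Error(q)|=-Error(q)$, the dual of the identity $|Error(q)|=Error(q)$ used for $MAX$. The same reasoning applied to the pre-computed $MIN$ query $Q$ gives $Error(Q)\le 0$. With both sign facts established, the target inequality becomes $-Error(q)\ge|Error(q)-Error(Q)|$.

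Next I would dispatch this with the same two-case split, writing $e_q=Error(q)$ and $e_Q=Error(Q)$, both nonpositive. If $e_q\le e_Q$, then $|e_q-e_Q|=e_Q-e_q$ and the inequality collapses to $e_Q\le 0$, which always holds for a $MIN$ query; moreover this branch already sits inside the hypothesis, since $e_q\le 0$ forces $2e_q\le e_q\le e_Q$, i.e.\ $Error(Q)\ge 2\cdot Error(q)$. If instead $e_q> e_Q$, then $|e_q-e_Q|=e_q-e_Q$ and the inequality reduces to $e_Q\ge 2e_q$, which is exactly the hypothesis $Error(Q)\ge 2\cdot Error(q)$. Both branches therefore follow from the single stated condition.

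The main obstacle is not the algebra, which is routine once the setup is aligned with Theorem~\ref{theorem:MAX}, but rather justifying the over-estimation claim $EST(q)\ge R(q)$ cleanly and being careful that the inequality flips relative to the $MAX$ case, so that the correct hypothesis is $Error(Q)\ge 2\cdot Error(q)$ (an upper bound on the magnitude $|Error(Q)|$) rather than $\le$. I would also verify the degenerate boundary $Error(q)=0$, where the hypothesis forces $Error(Q)=0$ and the two estimates coincide, to confirm it is consistently covered.
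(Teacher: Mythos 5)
Your proposal is correct and is exactly the argument the paper intends: the paper omits this proof, saying only that it ``can be proved in the same way as proving the Theorem~\ref{theorem:MAX}'', and your write-up is precisely that mirror --- flipping the sampling bias to $EST(q)\ge R(q)$ so that $Error(q)\le 0$ and $|Error(q)|=-Error(q)$, then running the same two-case split to land on the hypothesis $Error(Q)\ge 2\cdot Error(q)$. Your version in fact supplies the details the paper leaves implicit, including the sign analysis and the boundary case $Error(q)=0$, at the same level of rigor as the paper's own Theorem~\ref{theorem:MAX} proof.
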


We omit the proof of Theorem~\ref{theorem:MIN} since it can be proved in the same way as proving the Theorem~\ref{theorem:MAX}.
These theorems demonstrate that, as long as we can find a pre-computed query satisfying the assumption, the LAQP is more accurate than the sampling-based AQP.

\section{Extensions}
In this section, we discuss the diversification and optimization of LAQP to improve the performance furthermore.


\subsection{Diversification}\label{section:diversification}
In the previous section, we have discussed training the error model based on the given query logs. For such approach, we have two questions. Are all the queries in the log suitable to be used to train the error model? Do we need to store all the queries that keep coming up?   

Clearly, it is not a good idea to store all the processed queries and train the error model with them, due to its huge storage requirement.
The simplest way to reduce the storage requirement is to limit the number of queries in the log. However, reducing the number of pre-computed queries will affect the accuracy of the LAQP. We need to find a proper subset of the log that performs well with a limited number of queries. As discussed in Section~\ref{section:LAQPguarantee}, the estimation accuracy of LAQP depends on (1) the accuracy of the error model and (2) the similarity between the error of a pre-computed query and the predicted error of the given query. We consider to improve these two points by diversifying the training data and the errors of a limited number of pre-computed queries.

On the one hand, the error model benefits from the diversification of the training data.
The diversity of training data ensures that it can provide more discriminative information to the model~\cite{DBLP:journals/access/GongZH19}. Therefore, we consider increasing the diversify of the training data in LAQP, i.e., the pre-computed queries in the log.
On the other hand, the diversification of the sampling-based estimation errors is possible to improve the estimation accuracy. We show that diversifying the sampling-based errors of the pre-computed queries is possible to reduce the maximum error-difference between a query and its `error-similar' pre-computed query. That difference determines the estimation accuracy as we discussed before. Thus, the sampling-based estimation error should also be diversified.

We take the Max-Min~\cite{DBLP:conf/www/GollapudiS09} diversification method as an example to show that it is possible to reduce the estimation error. The Max-Min method aims at maximizing the minimum of the distance between any two elements in the diversified result.
We use the following example to show that the maximum of the distance between the given query and its `error-similar' pre-computed query can be reduced by the Max-Min diversification method. 

\begin{figure}
\centering
\includegraphics[scale=0.3]{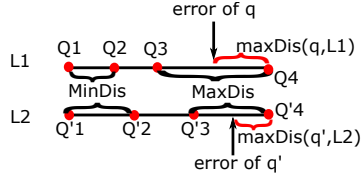}
\caption{Diversification}
\label{Fig:ExampleDiversification}
\end{figure}

\begin{example}
  Suppose that the $L_1$ and $L_2$ in Figure~\ref{Fig:ExampleDiversification} are two subsets of the query log. The red points on the first line are the sampling-based errors of the four queries $Q_1$, $Q_2$, $Q_3$, and $Q_4$ in $L_1$. The red points on the second line are those errors in $L_2$. Assuming that the ranges of the errors in $L_1$ and $L_2$ are the same, and $L_2$ is the diversified result chosen from the query log, the minimum distance (MinDis) of two errors in $L_2$ is larger than that of the distance in $L_1$. The max distance $maxDis(q,L_1)$ between the error of a query $q$ and its nearest error in $L_1$ is the half of the max distance (MaxDis) in $L_1$. Since the MinDis in $L_2$ is larger than that of $L_1$, the maximum of its MaxDis is smaller than that of $L_1$, i.e., $maxDis(q,L_1)>maxDis(q,L_2)$.
\end{example}

As shown in the above example, the diversification method is possible to reduce the maximum of the difference between a query and its `error-similar' pre-computed one in the log. The difference determines the accuracy of the query estimation as we discussed before. Thus, diversifying the sampling-based errors of the pre-computed queries is possible to reduce the estimation error.



To discuss the diversification strategy, we define the distance of two queries as follows.

\begin{definition}
 The distance of query $Q_i$ and $Q_j$:
 \begin{displaymath}
   Dis(Q_i, Q_j)=\frac{\sum_{x \in d}^{(l_{i,x}-l_{j,x})^2+(r_{i,x}-r_{j,x})^2}}{2d}+(Error_i-Error_j)^2,
 \end{displaymath}
 where $d$ is the number of dimensions, the $l_{i,x}$ and $r_{i,x}$ mean the left and right range boundaries of the $Q_i$ in the $x$th dimension, and the $Error_i$ means the sampling-based estimation error of $Q_i$.
\end{definition}

According to the definition of the distance between two queries, a variety of diversification methods~\cite{DBLP:conf/www/GollapudiS09, DBLP:conf/icde/VieiraRBHSTT11} can be adopted to find a set of queries with high diversity. The usage of the diversification makes it possible to tune a good model with limited pre-computed aggregations and increases the opportunity of finding a sufficiently similar pre-computed query for each new query. 
Note that the normalization is required before the computation of the distance between two multi-dimensional queries in the implement.
We make experiments to test the improvement of the LAQP benefiting from the diversification in Section~\ref{exp:diversification}.

\subsection{Optimization}\label{section:optimization}
The final estimation provided by our LAQP is based on an existing pre-computed query as discussed before. However, the pre-computed query chosen according to the `error-similar' strategy is not always quite reliable. What if the situation that the model is not well tuned? What if there exists a pre-computed query whose predicate range is sufficiently similar to the query range of the new one? That is, the LAQP may fail to find the true `error-similar' query based on the wrong prediction of the error model. We still have a chance to give an accurate estimation based on a `range-similar' pre-computed query whose predicate range is close to that of the new query. Therefore, we consider to make a proper combination of our `error-similar' strategy with the previous `range-similar' strategy to find the optimal pre-computed query when the model accuracy is not satisfactory.

\begin{algorithm}
\caption{Optimized-LAQP}\label{algorithm:LAQP-Opt}
\textbf{Input: QueryLog $QL$, DATA $D$, New Query $q$ }\\
\textbf{Output: Sample $S$, Error model $f$, Estimation $est$}
\begin{algorithmic}[1]
\State $f: Q_i\rightarrow R_i-\hat{R}_i$, $R_i\in QL$
\State $Error_q\leftarrow f(q)$
\State $Dis(q,Q_i)=\alpha\cdot EDis(q,Q_i)+\beta\cdot RDis(q,Q_i)$
\State $opt=\arg_i\min Dis(q,Q_i) $
\State $est=R_{opt}+SAQP(q,S)-SAQP(Q_{opt},S)$
\end{algorithmic}
\end{algorithm}

We involve the range-distance to the query-distance to find the similar pre-computed query. We put different weights $\alpha$ and $\beta$ on the error-distance and the range-distance, and redefine the distance of two queries as follows.
\begin{equation}\label{equation:OptDis}
  Dis(Q_i, Q_j)=\alpha\cdot EDis(Q_i, Q_j)+\beta\cdot RDis(Q_i, Q_j)
\end{equation}
 We propose a new algorithm adopting the new distance in Equation~\ref{equation:OptDis} to find the nearest pre-computed query.
 The weights in Equation~\ref{equation:OptDis} show the reliability of the error model.
They can be computed by solving an optimization problem whose object function is to minimize the error on the testing set.

The pseudo-code of our optimized-LAQP is shown in Algorithm~\ref{algorithm:LAQP-Opt}. Learning the error model is the first step (Line 1). The way of choosing the optimal pre-computed query is based on the redesigned distance of two queries involving both the error-distance($EDis$) and the range-distance($RDis$) (Line 3-4). 
The parameters $\alpha$ and $\beta$ can be either determined by the users or tuned by optimizing the accuracy.

As discussed in Section~\ref{section:LAQPguarantee}, the accuracy of an estimation is determined by the difference between the sampling-based estimation error of the new query and the chosen pre-computed one, i.e., $|error_q-error_{opt}|$. Therefore we regard the sum of that difference of each query in the $Test$ set as the object function $z$. The problem of choosing the optimal weights $\alpha$ and $\beta$ can be solved by the optimization problem as follows.

\begin{align}
\label{equation:opt}&\min\,\,z=\sum_{q\in Test} |error_q-error_{opt(q,\alpha,\beta)}|^2\\
\label{equation:Pre-Aggr}&opt(q,\alpha,\beta)=\mathop{\arg\min}_{Q_i\in Train} \alpha\cdot EDis(q,Q_i)+\beta\cdot RDis(q,Q_i)\\
&EDis(Q_i, Q_j)=|error_{Q_i}-error_{Q_j}|^2\\
&RDis(Q_i, Q_j)=\frac{\sum_{x \in d}^{(l_{i,x}-l_{j,x})^2+(r_{i,x}-r_{j,x})^2}}{2d}\\
&s.t.\,\,\alpha+\beta=1, 0\le\alpha\le 1, 0\le\beta\le 1.
\end{align}

Such an optimization problem could be solved by the approaches such as the Brent's method~\cite{brent2013algorithms}, and the weights $\alpha$ and $\beta$ are obtained in the solution.
We proved that the optimization improves the performance of the original LAQP.

\begin{theorem}
  If $\alpha=1$, the accuracy of the optimized LAQP is the same with the original LAQP. If $\alpha<1$, the accuracy of the optimized LAQP is better than the original LAQP on the testing set.
\end{theorem}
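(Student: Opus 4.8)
The plan is to treat the entire statement as a consequence of the fact that the original LAQP is precisely the special case $\alpha=1$ (hence $\beta=0$) of the optimization in Algorithm~\ref{algorithm:LAQP-Opt}, so the optimized estimator minimizes the objective $z$ over a feasible region that already contains the original estimator as a point. Throughout I would identify ``accuracy'' with the objective $z=\sum_{q\in Test}|error_q-error_{opt(q,\alpha,\beta)}|^2$ of Equation~\ref{equation:opt}; this identification is legitimate because Theorem~\ref{theorem:accuracy} shows the per-query estimation error equals $|error_q-error_{opt}|$, so a smaller $z$ is exactly a smaller total squared estimation error on the testing set. The one bookkeeping point to keep straight is that the selection rule in Equation~\ref{equation:Pre-Aggr} uses the \emph{predicted} error $f(q)$ inside $EDis$ (set as $Error_q\leftarrow f(q)$ in Line~2 of Algorithm~\ref{algorithm:LAQP-Opt}), whereas $z$ scores the resulting choice against the \emph{true} sampling-based error $error_q$ on the test set.

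For the first claim I would set $\beta=0$ in Equation~\ref{equation:OptDis}, so the combined distance collapses to $Dis(q,Q_i)=EDis(q,Q_i)=|f(q)-error_{Q_i}|^2$. Minimizing this over $Q_i\in Train$ yields $opt(q,1,0)=\arg\min_{Q_i}|f(q)-(R(Q_i)-EST(Q_i))|$, which is term-for-term the selector $q_{opt}$ of Definition~\ref{def:LAQP}. Since the chosen pre-computed query is then identical for every test query, the per-query error $|error_q-error_{opt}|$ is unchanged, hence $z(1,0)=z_{\mathrm{orig}}$ and the two methods have the same accuracy.

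For the second claim I would invoke the optimization itself. The feasible set $\{\alpha+\beta=1,\ 0\le\alpha,\beta\le1\}$ contains the point $(1,0)$, which by the previous paragraph realizes the original LAQP; therefore the global optimum $(\alpha^\ast,\beta^\ast)$ satisfies $z(\alpha^\ast,\beta^\ast)\le z(1,0)=z_{\mathrm{orig}}$, so the optimized LAQP is never worse on the testing set. To upgrade this to strict improvement when $\alpha^\ast<1$, I would argue that if $z$ attained its minimum also at $\alpha=1$, the solver could return the boundary point $\alpha=1$; hence reporting an interior minimizer $\alpha^\ast<1$ means $z(\alpha^\ast,\beta^\ast)<z(1,0)$, i.e.\ strictly smaller total squared estimation error, which is strictly better accuracy by Theorem~\ref{theorem:accuracy}.

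The main obstacle, and where I would spend the most care, is this strictness step, because $z(\alpha)$ is only a piecewise-constant step function of the weights (the $\arg\min$ in Equation~\ref{equation:Pre-Aggr} jumps at discrete thresholds), so ``$\alpha^\ast<1$'' does not by itself force $z(\alpha^\ast)<z(1)$ without a tie-breaking convention. I would make the statement precise by declaring that the optimizer returns $\alpha^\ast<1$ only when doing so strictly decreases $z$, and I would flag the complementary subtlety that the selection uses $f(q)$ while $z$ uses the true $error_q$ on $Test$: the improvement is therefore a genuine generalization gain on the testing set rather than a tautology on the selection criterion.
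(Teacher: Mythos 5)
Your proof follows essentially the same route as the paper's: both parts hinge on the observation that $\alpha=1$, $\beta=0$ collapses the selector of Equation~\ref{equation:Pre-Aggr} to the selector of Definition~\ref{def:LAQP}, and the second part then exploits that $(1,0)$ lies in the feasible set of the optimization in Equation~\ref{equation:opt}. Where you diverge is in how the second claim is finished, and there your version is tighter than the paper's. The paper argues by contradiction: if the optimized LAQP were worse, then $\sum_{q\in Test}|error_q-error_{opt(q,\alpha,\beta)}|^2 > \sum_{q\in Test}|error_q-error_{opt(q,1,0)}|^2$, contradicting minimality of the objective. But refuting \emph{worse} only establishes $z(\alpha^\ast,\beta^\ast)\le z(1,0)$, i.e.\ ``no worse,'' not the strict improvement the theorem asserts for $\alpha<1$; the paper silently passes from non-inferiority to superiority. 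You correctly identify that, because $z$ is a piecewise-constant function of the weights, a solver reporting $\alpha^\ast<1$ does not by itself force $z(\alpha^\ast,\beta^\ast)<z(1,0)$, and you patch this with an explicit tie-breaking convention (return $\alpha^\ast<1$ only when it strictly decreases $z$). That convention, or something equivalent, is precisely what the paper's own proof is missing; without it, both arguments prove only weak improvement. Your flagging of the distinction between the selection criterion (which uses the predicted error $f(q)$) and the objective (which scores against the true error on $Test$) is also a legitimate point the paper leaves implicit.
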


\begin{proof}
(1) If $\alpha=1$, Equation~\ref{equation:Pre-Aggr} is simplified as $opt(q,1,0)=\mathop{\arg\min}_{Q_i\in Train} |error_{q}-error_{Q_i}|$, which is the same with the equation in Algorithm 2 (Line 2). Thus, the optimized LAQP is the same with the original LAQP.

(2) If $\alpha<1$, and assuming that the optimized LAQP is worse than the original LAQP, the sum of the squared error on the testing set of the optimized LAQP is higher than that of the original LAQP, i.e.,  $\sum_{q\in Test} |error_q-error_{opt(q,\alpha,\beta)}|^2 > \sum_{q\in Test} |error_q-error_{opt(q,1,0)}|^2$. It contradicts the object function of the optimization in Equation~\ref{equation:opt}. Therefore, the assumption is false. Thus, the optimized LAQP is better than the original LAQP in this situation.
\end{proof}

Therefore, the optimization improves the accuracy on the testing set. Involving the optimization only modifies the way of choosing the pre-computed query for each new query, the estimation error is still limited according to the Theorem~\ref{theorem:error} as we discussed in Section~\ref{section:LAQPguarantee}.


\section{Experiments}
In this section, we show the experimental results of the LAQP. We compare the LAQP with the existing methods including the most typical and widely used sampling-based AQP method, the most recent method combining the pre-computed aggregations with sampling called AQP++, and the state-of-the-art learning-based AQP method called DBEst.

\subsection{Experimental Setup}


\noindent \underline{Hardware and Library}
All the experiments were conducted on a laptop with an Intel Core i5 CPU with 2.60GHz clock frequency and 8GB of RAM.
The error model implementation is based on the RandomForestRegressor in the scikit-learn library \footnote{\url{https://scikit-learn.org/stable/}}. 

\noindent \underline{Datesets}
We use three real-life datasets for experiments.The POWER and WESAD are two large multi-dimensional datasets. The distribution of the attribute `global\_active\_power' in POWER is a long-tailed distribution, while the distribution of each attribute in WESAD approximates a normal distribution. We also use the PM2.5 dataset adopted in the experiments of the DBEst for fair comparisons.

(1) The POWER dataset is the ``Individual household electric power consumption Data Set''\footnote{\url{http://archive.ics.uci.edu/ml/datasets/Individual+household+electric+power+consumption}}. This dataset contains 2,075,259 tuples and 9 attributes. We use a subset of this dataset including seven numerical attributes and 2,000,000 tuples to conduct the experiments.

(2) The WESAD (Wearable Stress and Affect Detection) dataset~\cite{DBLP:conf/icmi/SchmidtRDML18} is a real-life dataset. This dataset is a 16GB dataset containing 63 million records. We use eight attributes (CH1, CH2, CH3, CH4, CH5, CH6, CH7, CH8) from it to conduct our experiments.

(3) The PM2.5 dataset is a real-life hourly dataset containing the information of the PM2.5 data of the US Embassy in Beijing \footnote{\url{http://archive.ics.uci.edu/ml/datasets/Beijing+PM2.5+Data}}. There are 43,824 instances in this dataset.   

\noindent \underline{Queries}
The queries in our experiments include multi-dimensional predicates for testing the influence of dimension on the performance. We make examples about the queries adopted in the experiments.

\noindent An example of a one-dimensional query:

$Q_{1D}:$
\begin{minipage}[t]{0.9\linewidth}
  SELECT COUNT(pm2.5) from PM2.5,\\ where $l\le$PREC$\le r$.
\end{minipage}

\noindent An example of a three-dimensional query:

$Q_{3D}:$
\begin{minipage}[t]{0.9\linewidth}
SELECT COUNT(CH1) from WESAD,\\ where $l_1\le$CH1$\le r_1$, $l_2\le$CH2$\le r_2$, $l_3\le$CH3$\le r_3$.
\end{minipage}

We introduce the generation of the queries in the experiments as follows.

(1) The aggregated attribute in the queries for the PM2.5 dataset is `pm2.5', and each predicate only involves one attribute `PREC'.
The one-dimensional queries are generated by randomly choosing the range boundaries from the domain of the attribute `PREC'.

(2) The aggregated attribute for the POWER dataset is `global\_active\_power', and each predicate involves seven attributes.
In order to avoid most of the multi-dimensional query results to be zero, we limit the range to generate the boundaries in each dimension. The left boundary of the range in each dimension is randomly chosen from the first quarter of the entire value range of the corresponding attribute. Similarly, each right boundary is randomly chosen from the last quarter of the corresponding attribute range.

(3) The aggregated attribute for the WESAD dataset is `CH1', and each predicate involves eight attributes.
The predicates are generated in the same way as those generated for the POWER dataset.

For the following experiments, we generate the queries for each dataset. We take a subset of the queries to form the query log, and estimate the remaining queries based on the LAQP method.


\noindent \underline{Error Metrics}
We use two error metrics to measure the accuracy of our LAQP method. In the following definitions, $Q$, $\hat{r_i}$ and $r_i$ denote the queries, the estimated query results and the true results, respectively.

Average Relative Error: 
$ARE = \frac{1}{|Q|}\sum_{q_i\in Q}\frac{|\hat{r_i}-r_i|}{r_i}$

Mean Squared Error: 
$MSE = \frac{1}{|Q|}\sum_{q_i\in Q}|\hat{r_i}-r_i|^2$


\noindent \underline{Implementation Details}
All the experiments are conducted in Python 3.5. We use the RandomForestRegressor in the scikit-learn package to model the relation between each pre-computed query and its sampling-based estimation error. A random forest is a meta estimator that fits a number of decision tree classifiers on various sub-samples of the dataset and uses averaging to improve the predictive accuracy and control over-fitting~\cite{DBLP:journals/ml/Breiman01}. We simply adopted the RandomForestRegressor since it is widely used and its parameters are easy to tune. 
We set the parameter of the RandomForestRegressor $max\_depth=3$. This parameter is determined by tuning. We use the first dataset to test the performance of the model with different $max\_depth$ and show the result in Figure~\ref{fig:RF}.

\begin{figure}
	\centering
	\subfigure[MSE]{
        \includegraphics[width=0.4\textwidth,height=3cm]{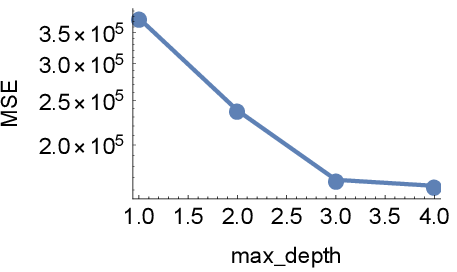}
        \label{fig:RF-MSE}
	}
    \subfigure[SpaceCost]{
		\includegraphics[width=0.4\textwidth,height=3cm]{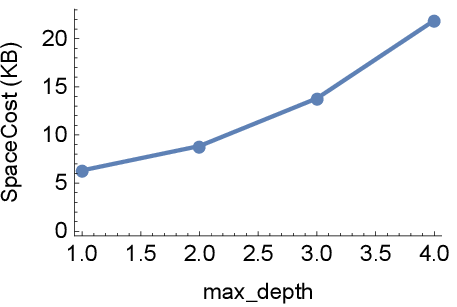}
        \label{fig:RF-SpaceCost}
	}
	\caption{The impact of the max\_depth of the RandomForest}
	\label{fig:RF}
\end{figure}




\noindent \underline{Competitors}
In the following experiments, we compare the performance of LAQP with the most representative or the state-of-the-art AQP methods including the simplest SAQP (the sampling-based AQP), a modified version of the AQP++~\cite{DBLP:journals/tkdd/KamatN18} and the DBEst~\cite{DBLP:conf/sigmod/MaT19}.

SAQP: the simplest sampling-based AQP method as introduced in Section~\ref{section:SAQP}. It estimates the query result according to a sample randomly chosen from the entire dataset.

AQP++: an AQP method combining the sampling-based AQP method with the pre-computed data cubes. In this experiment, we did not generate data cubes, but modify the AQP++ method to regard the pre-computed aggregations as the pre-computed cubes. We still follow the main idea of the original AQP++ estimating a new query based on a pre-computed query and computing the difference between the new query and the old one based on sampling. The reason for the modification is that we want to test the performance the AQP methods with a small number of pre-aggregations. However, if the BP-Cube in the original AQP++ is adopted, each dimension can be only partitioned into very few parts. In addition, the queries are not generated uniformly in the data range to avoid the query result to be zero. Thus, the queries may be far from the cells in the BP-Cube. However, since the queries and the pre-aggregations are generated in the same way as we introduced before, the pre-aggregations are more reliable than the BP-Cube. Thus, the modified version performs better than the original AQP++ in the following experiments.
In the following figures and experimental descriptions, the modified version is still called AQP++ for the sake of brevity.

DBEst: a learning-based AQP method which learns a density model and a regression model from a small sample of the data.
We adopted the implementation of the DBEst provided by the authors in github\footnote{\url{https://github.com/qingzma/DBEstClient}}.

\subsection{Accuracy}

In this section, we compare the accuracy of our LAQP method with the other methods. The accuracy of each method is measured by both the MSE and ARE. We also test the influence of the query selectivity on the accuracy. We compare our LAQP with the SAQP and AQP++ for the multi-dimensional queries. The comparisons involving the DBEst are only conducted on the one-dimensional queries due to the limitation of its implementation. We test their performance for three aggregation functions $Count$, $Sum$ and $Avg$. The queries are generated as introduced before. To be fair, all the methods use the same off-line sample.

\begin{figure}
	\centering
	\subfigure[MSE]{
        \includegraphics[width=0.4\textwidth,height=3cm]{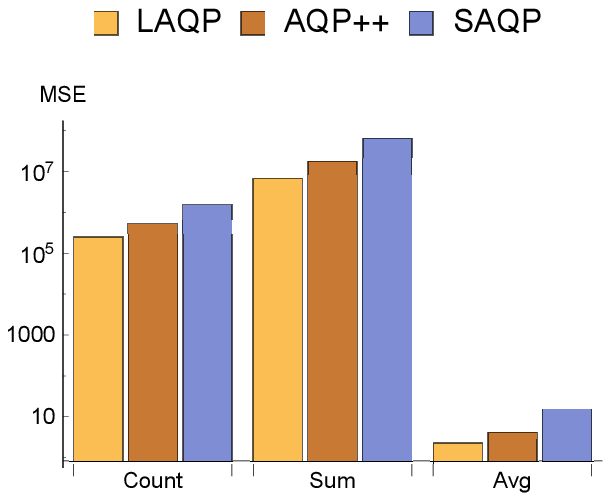}
        \label{fig:AccuracyMSE}
	}
    \subfigure[ARE]{
		\includegraphics[width=0.4\textwidth,height=3cm]{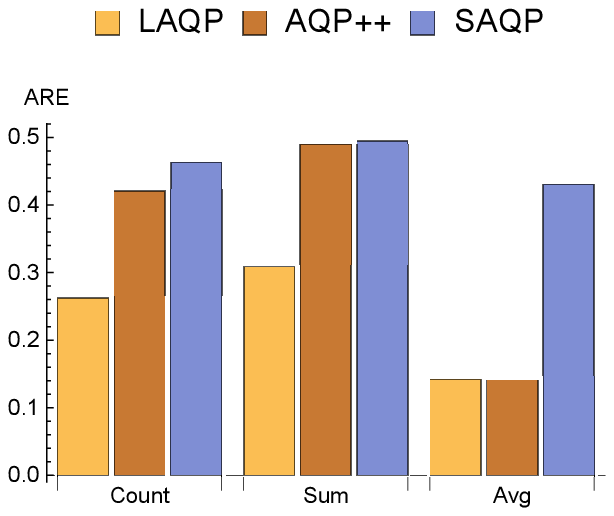}
        \label{fig:AccuracyARE}
	}
	\caption{Accuracy Comparison on Dataset POWER}
	\label{fig:Accuracy}
\end{figure}

\begin{figure}
	\centering
	\subfigure[MSE]{
        \includegraphics[width=0.4\textwidth,height=3cm]{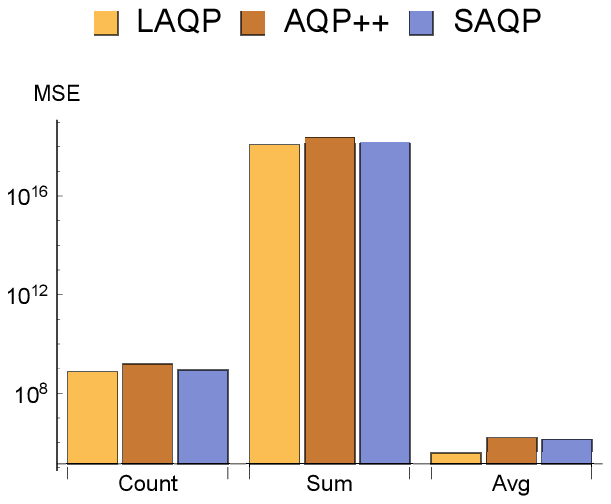}
        \label{fig:AccuracyWESADMSE}
	}
    \subfigure[ARE]{
		\includegraphics[width=0.4\textwidth,height=3cm]{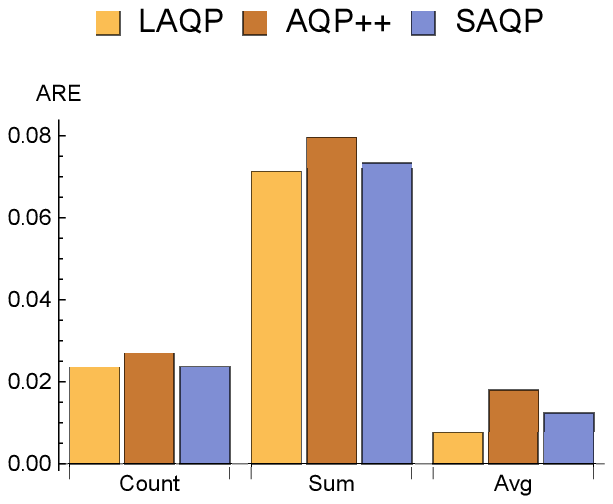}
        \label{fig:AccuracyWESADARE}
	}
	\caption{Accuracy Comparison on Dataset WESAD}
	\label{fig:AccuracyWESAD}
\vspace{-0.3cm}
\end{figure}

\begin{figure}
	\centering
	\subfigure[MSE]{
        \includegraphics[width=0.4\textwidth,height=3cm]{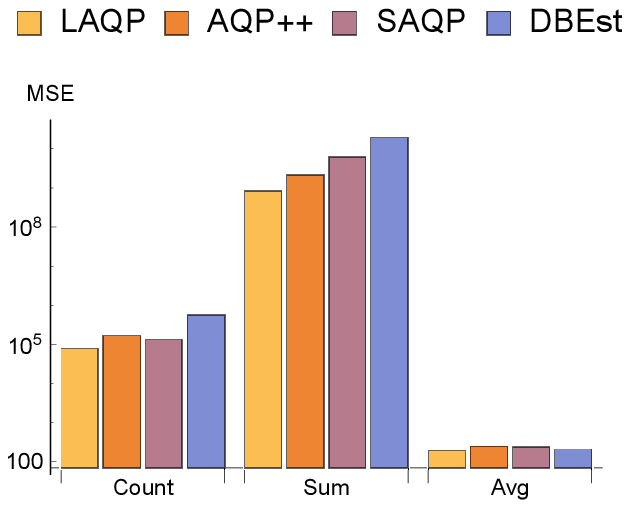}
        \label{fig:AccuracyPM25MSE}
	}
    \subfigure[ARE]{
		\includegraphics[width=0.4\textwidth,height=3cm]{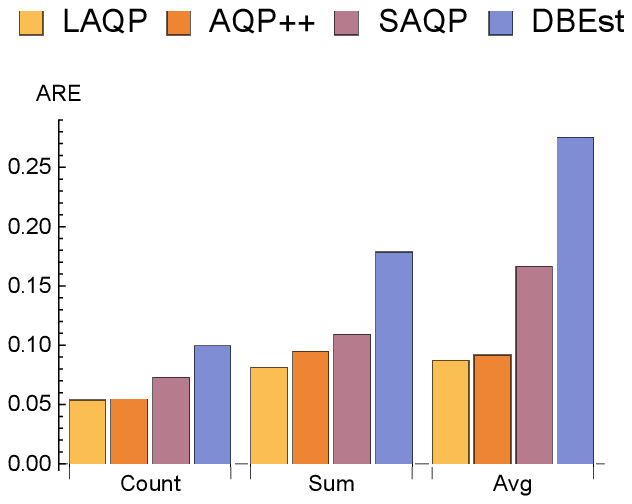}
        \label{fig:AccuracyPM25ARE}
	}
	\caption{Accuracy Comparison on Dataset PM2.5}
	\label{fig:AccuracyPM25}
\end{figure}

\noindent \underline{EXP1}:  We compare the accuracy of our LAQP, SAQP and the AQP++ on the dataset POWER. The results are shown in Figure~\ref{fig:Accuracy}. The number of samples is 2,000. We adopted 800 and 100 queries in the log to train and test the error model of the LAQP, respectively. We compute the estimation error of 100 new queries to evaluate the accuracy.  Each query includes a seven-dimensional predicate. The average selectivity of the queries is nearly 0.2\%. The results are shown in Figure~\ref{fig:Accuracy}. We can learn from the figure that the our LAQP is more accurate than the other two methods. Since the sample size is much smaller than the data size, the sample distribution is not sufficiently similar to that of the entire data. Therefore, the SAQP does not perform well in this experiment. Even though a large number of the pre-computed queries are adopted, it is still difficult for the AQP++ to find a similar pre-computed query for each new due to the high dimensions. As a comparison, LAQP does not suffer from the small sample size and the multi-dimensional data. LAQP benefits from the ability of the error model to handle the multi-dimensional data. In addition, it is easier to find a similar one-dimensional estimation error than finding a similar multi-dimensional predicate range.

\noindent \underline{EXP2}:  We compare the accuracy of our LAQP, SAQP and the AQP++ on the dataset WESAD. The results are shown in Figure~\ref{fig:AccuracyWESAD}.  The number of samples is 20K. We adopted 130 and 40 queries in the query log to train and test the error model of the LAQP, and we computed the estimation error of 30 new queries to evaluate the accuracy.  Each query includes an eight-dimensional predicate. The average selectivity of the queries is nearly 2\%.  In this experiment, the SAQP performs better than the AQP++, since the sample is sufficient while the number of the pre-computed queries is limited. Our LAQP is still more accurate than the other two methods.

\noindent \underline{EXP3}: We compare the accuracy of our LAQP, SAQP, AQP++ and DBEst on the dataset PM2.5. The experimental results are shown in Figure~\ref{fig:AccuracyPM25}. We randomly choose a sample with the sampling rate at 1\% from the entire dataset. We only use a query log including 200 one-dimensional queries in this experiment, and compute the estimation error of 100 new queries to evaluate the accuracy. The queries in this experiment only include the one-dimensional predicates due to the limitation of the implementation of the DBEst.  We can learn from the figure that the accuracy of our LAQP outperforms the other methods. The reason why our method outperforms the SAQP and AQP++ is similar to that in EXP1. The reason why LAQP outperforms the DBEst is that we use pre-computed queries to improve the accuracy, while the accuracy of the DBEst method largely depends on the sample. When the sample size is much smaller than the data size, the regression model of DBEst based on the sample is unreliable.

\noindent \underline{EXP4}: We test the influence of the query selectivity on the accuracy. The queries in the experiments only contain one-dimensional predicates and two-dimensional predicates, since it is difficult to generate high-dimensional queries with a high selectivity. The sample size of these methods are 2,000, and the number of the pre-computed queries is 200. The relative errors of the estimation results on the one-dimensional queries and the two-dimensional queries are shown in Figure~\ref{fig:Selectivity} and Figure~\ref{fig:Selectivity2D}, respectively. We can learn from the figures that the relative errors of the estimations decrease with the selectivity.
In most cases, LAQP is more accurate than the other methods.
The performance of these three methods are similar on AVG queries. The reason is that the variance of the aggregated attribute values is not high, meaning that the results of the AVG queries are similar. As a comparison, LAQP still has superiority to handle the two-dimensional AVG queries as shown in Figure~\ref{fig:SelectivityAvg2D}.
In this experiment, the sampling rate is only 0.1\% and only a small number of pre-computed queries are adopted in AQP++ and LAQP. Therefore, the distribution of the sample is not sufficiently similar to that of the entire dataset, and it is difficult to find a pre-computed query similar to each new query. Our LAQP is better than the SAQP since the sampling-based estimation errors are learned from the pre-computed queries to improve the accuracy. Our LAQP provides an opportunity to find an `error-similar' pre-computed query, other than finding a `range-similar' pre-computed one. As the dimension increases, it will be more difficult to find a  pre-computed query with sufficiently similar range. This point is also verified by the fact that the advantage of LAQP on the two-dimensional queries is more significant than that on the one-dimensional queries.

\begin{figure*}
	\centering
	\subfigure[COUNT]{
        \includegraphics[width=0.3\textwidth,height=3cm]{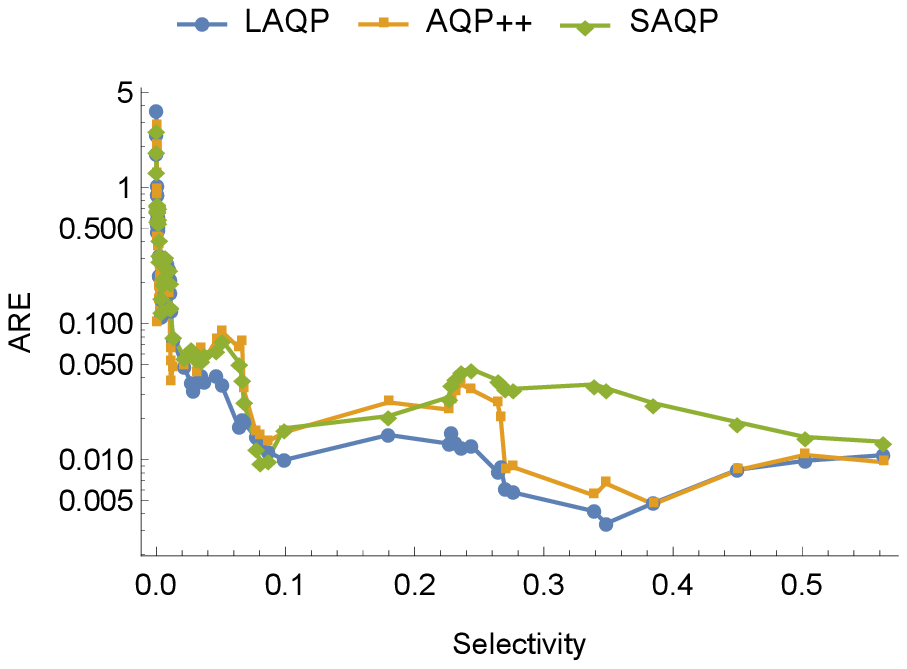}
        \label{fig:SelectivityCount}
	}
    \subfigure[SUM]{
		\includegraphics[width=0.3\textwidth,height=3cm]{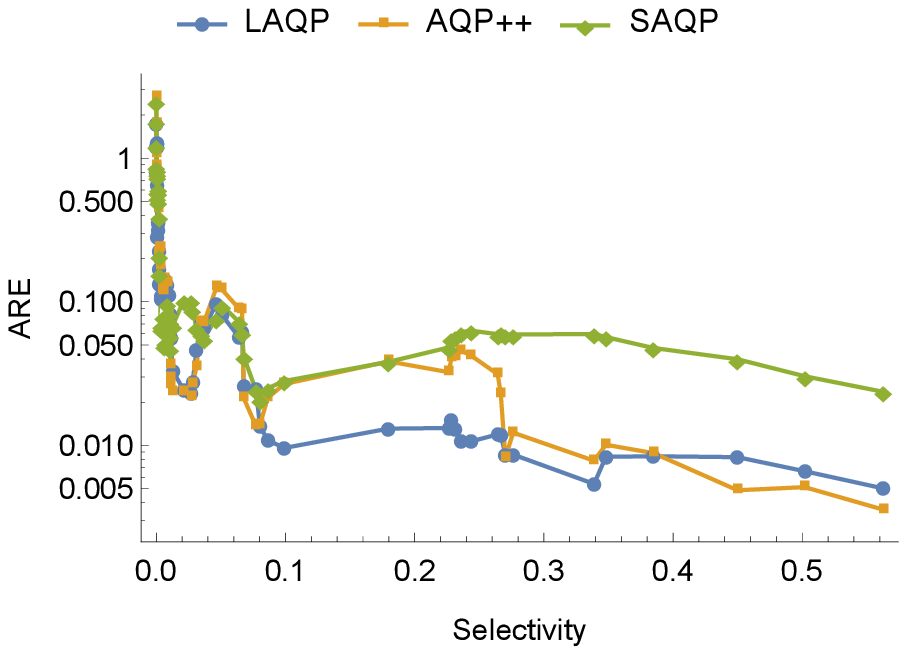}
        \label{fig:SelectivitySum}
	}
    \subfigure[AVG]{
		\includegraphics[width=0.3\textwidth,height=3cm]{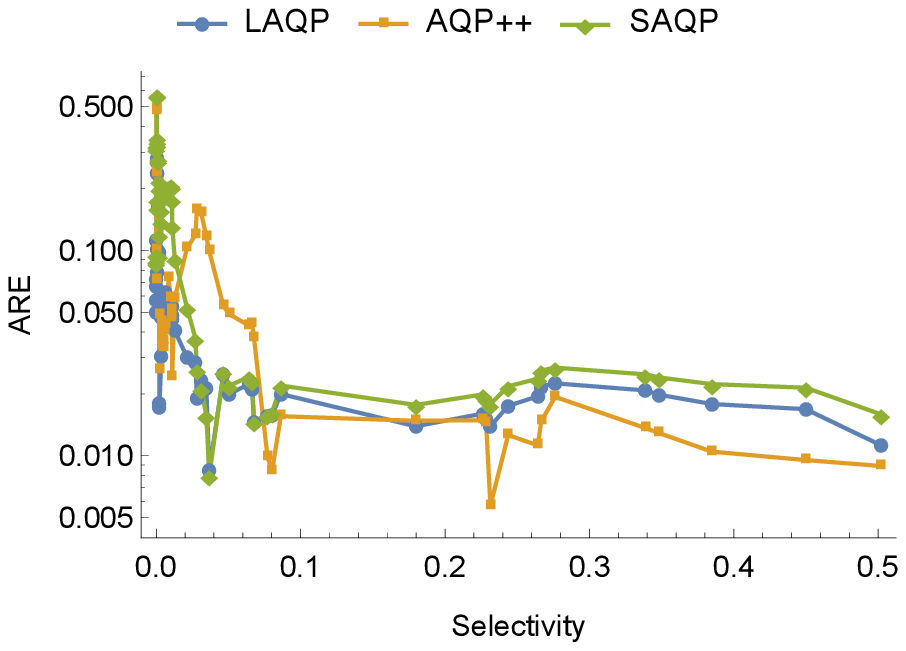}
        \label{fig:SelectivityAvg}
	}
	\caption{The influence of selectivity on the accuracy (1D).}
	\label{fig:Selectivity}
\end{figure*}

\begin{figure*}
	\centering
	\subfigure[COUNT]{
        \includegraphics[width=0.3\textwidth,height=3cm]{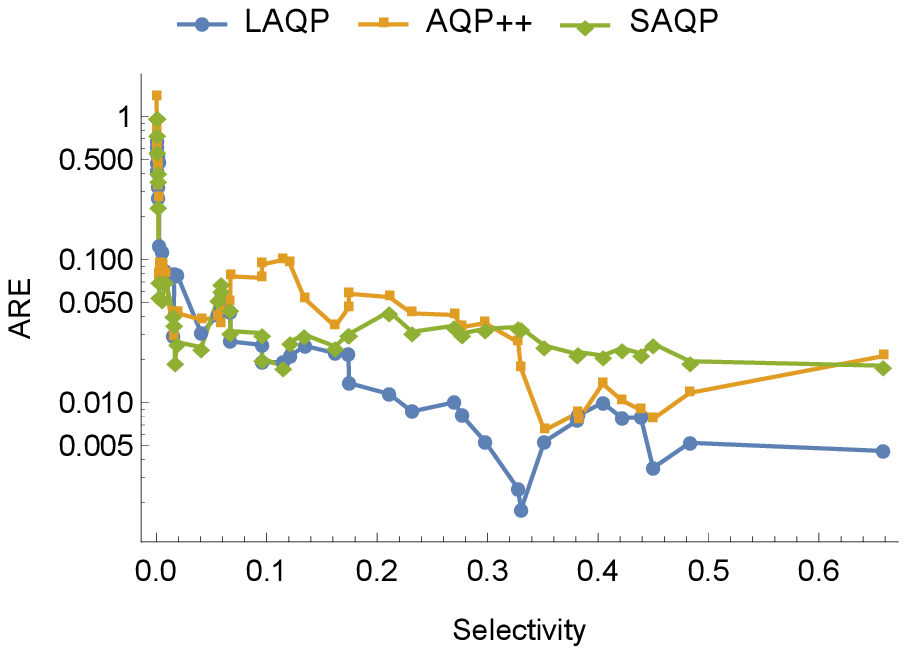}
        \label{fig:SelectivityCount2D}
	}
    \subfigure[SUM]{
		\includegraphics[width=0.3\textwidth,height=3cm]{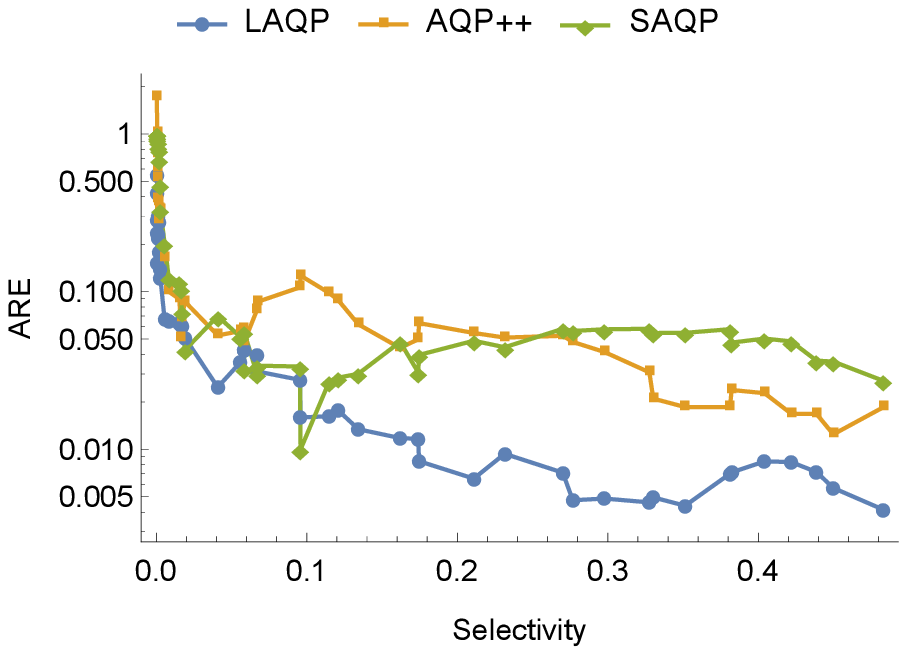}
        \label{fig:SelectivitySum2D}
	}
    \subfigure[AVG]{
		\includegraphics[width=0.3\textwidth,height=3cm]{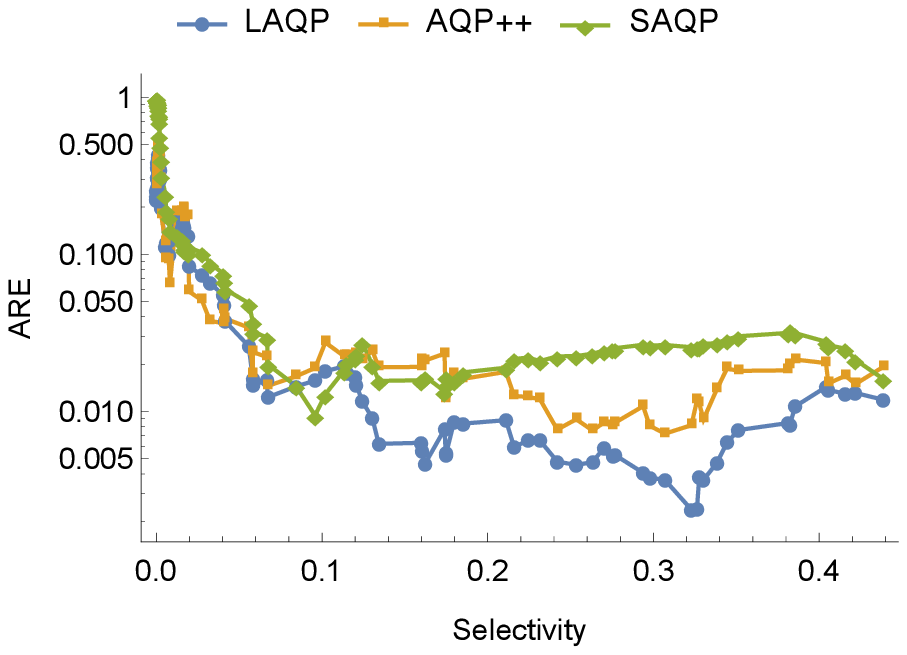}
        \label{fig:SelectivityAvg2D}
	}
	\caption{The influence of selectivity on the accuracy (2D).}
	\label{fig:Selectivity2D}
\end{figure*}

\subsection{Space Cost}
We compare the influence of the space cost on the accuracy. The space cost of the SAQP is the sample size. The space cost of the AQP++ includes both the sample size and the cost of the pre-computed aggregations. LAQP has an additional cost of the error model compared with the AQP++. Since the space cost of the sample, the pre-computed aggregations, and the error model are independent of the data size, we did not evaluate the relation of the data size and the space cost. Instead, we focus on the influence of the space cost on the accuracy.

This experiment was conducted on the POWER dataset. Each query involves a seven-dimensional predicate.
The number of the samples in the SAQP varies from 1k to 5k. Since the space cost of LAQP and AQP++ include three and two parts, respectively, it is difficult to make their space cost absolutely the same with that of the SAQP. We adopted different settings to find the space cost of LAQP and AQP++ in the range similar to that of the SAQP.
The settings (the number of samples, the number of pre-computed queries, the space cost of the error model, and the total space cost) of different methods are shown in Table~\ref{Table:SpaceCost}.  The experimental results are shown in Figure~\ref{fig:SpaceCost}.
\begin{table}
\centering
\begin{tabular}{|c|c|c|c|c|}
\hline
Method&Sample&Pre-Queries&Model&SpaceCost(KB)\\
\hline
LAQP&1000&250&12KB&184\\
\hline
LAQP&2000&250&12KB&296\\
\hline
LAQP&2000&500&13KB&358\\
\hline
LAQP&2000&800&14KB&430\\
\hline
LAQP&2000&1000&14KB&534\\
\hline
AQP++&1000&250& &172\\
\hline
AQP++&1000&500& &232\\
\hline
AQP++&2000&500& &344\\
\hline
AQP++&2000&800& &416\\
\hline
AQP++&2000&1000& &464\\
\hline
SAQP&1000& & &112\\
\hline
SAQP&2000& & &224\\
\hline
SAQP&3000& & &336\\
\hline
SAQP&4000& & &448\\
\hline
SAQP&5000& & &560\\
\hline
\end{tabular}
\caption{Settings for the space cost experiment.}\label{Table:SpaceCost}
\end{table}

We can learn from Figure~\ref{fig:SpaceCost} that both the MSE and ARE decrease with the space cost. This phenomenon is absolutely reasonable since more samples, more pre-computed queries and more complex model lead to higher accuracy. The estimation error of our method is lower than that of the SAQP and AQP++ for the most of the time when only a little space are provided. The distribution of a small sample has little chance to be similar to the entire data. Insufficient pre-computed queries also make it difficult to find a similar pre-computed query. However, the error model can learn the estimation error of a small sample while occupying little space. That is the reason why LAQP outperforms the other methods when the provided space is limited. However, when enough samples and pre-computed queries are provided, LAQP will not be better than the other two methods. 

\begin{figure}
	\centering
	\subfigure[MSE]{
        \includegraphics[width=0.4\textwidth,height=3cm]{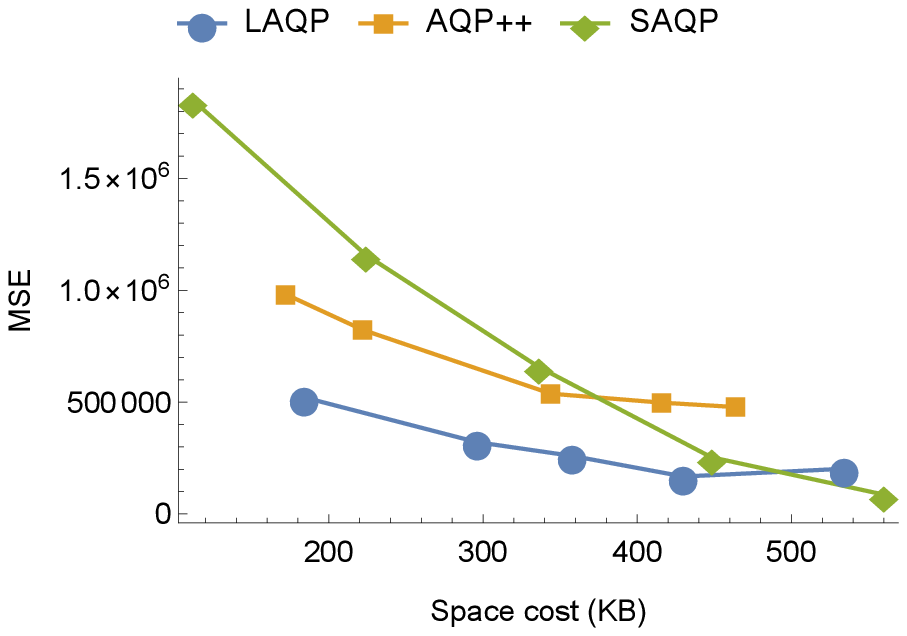}
        \label{fig:SpaceCost-MSE}
	}
    \subfigure[ARE]{
		\includegraphics[width=0.4\textwidth,height=3cm]{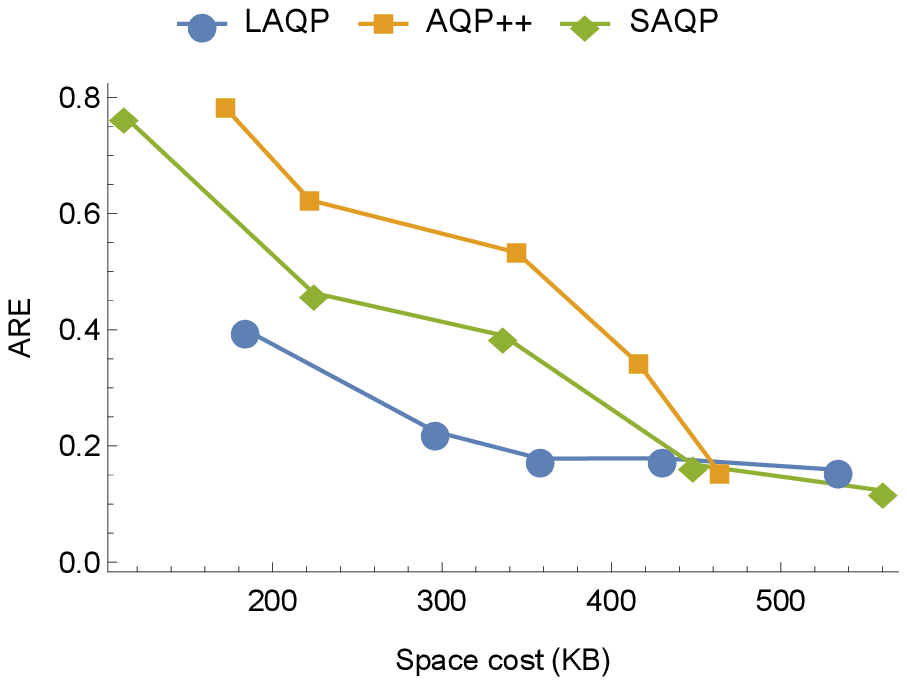}
        \label{fig:SpaceCost-ARE}
	}
	\caption{The impact of the space cost}
	\label{fig:SpaceCost}
\end{figure}

\subsection{Efficiency}
We test the efficiency of LAQP, and compare our method with the existing methods. We compare the average processing time of estimating 100 queries. We also test the influence of the dimensions on the processing time. The query time of the LAQP includes the prediction, finding the nearest query, and the sampling-based estimation. The aggregation function in these experiments is COUNT. The pre-computation and the training time are off-line process, and they are not included in the following experiments.

\noindent \underline{EXP1}: We compare the query processing time of the LAQP, AQP++, SAQP and DBEst on the PM2.5 dataset. The sample includes 4K tuples. The number of the pre-computed queries is 100. The average processing time of 100 one-dimensional queries are shown in Figure~\ref{Fig:Efficiency_PM25}. We can learn that our LAQP is comparable to the other methods, even though the processing time of LAQP includes three parts.

\noindent \underline{EXP2}: We compare the query processing time of the LAQP, AQP++, and SAQP on the POWER dataset. The query predicate varies from one dimension to seven dimensions. The sample in this experiment includes 20K tuples. We compute the processing time of 100 queries for each kind of aggregation function. The experimental results are shown in Figure~\ref{Fig:Efficiency_Dimension}. The processing time of these three methods all increases with the number of dimensions. The processing time of LAQP and AQP++ is nearly two times of the SAQP. The reason is that both the processing time of LAQP and AQP++ includes estimating the given query and a pre-computed query based on a sample. If the sampling-based estimation of each pre-computed query is computed off-line and stored in memory, the query time of them will be close to that of the SAQP. 

\begin{figure}
\begin{minipage}[h]{0.5\linewidth}
\center
\includegraphics[width=1\linewidth,height=4cm]{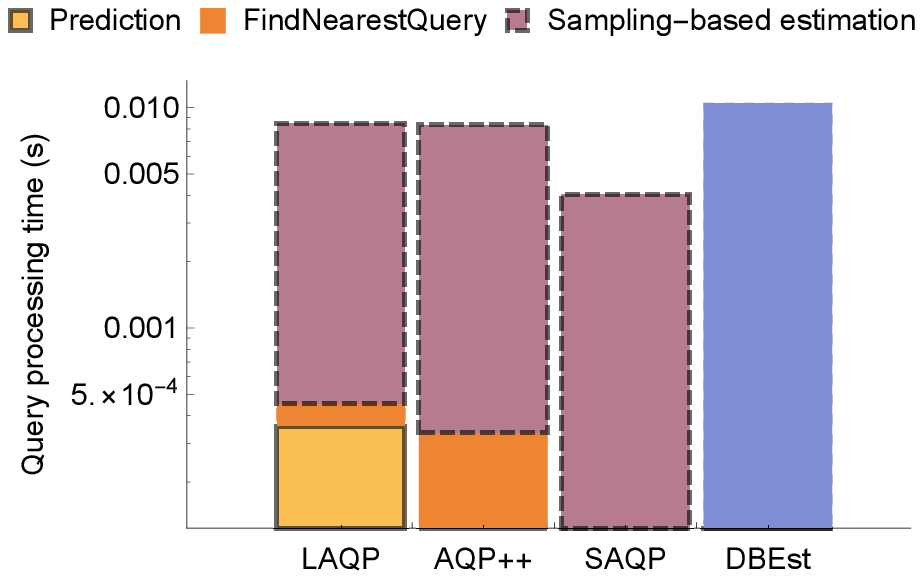}
\caption{Efficiency}
\label{Fig:Efficiency_PM25}
\end{minipage}
\begin{minipage}[h]{0.5\linewidth}
\center
\includegraphics[width=1\linewidth,height=4cm]{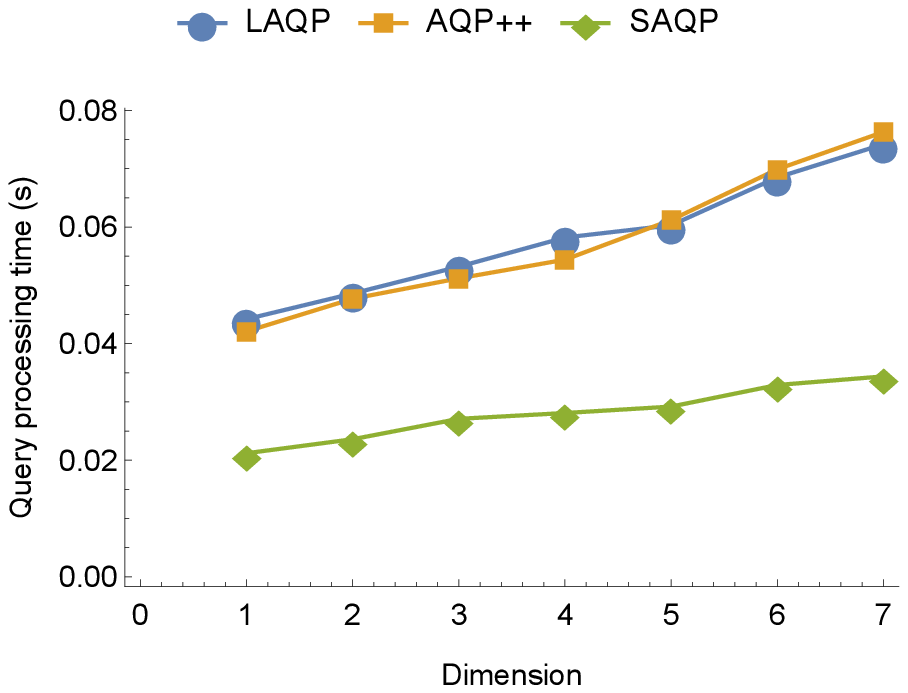}
\caption{The impact of dimension}
\label{Fig:Efficiency_Dimension}
\end{minipage}
\end{figure}

\subsection{Other aggregation functions}\label{exp:aggregations}
In this section, we test the performance of LAQP on the other aggregation functions including the VAR, STD, MIN and MAX. We compare our LAQP with the SAQP and AQP++ on the PM2.5 dataset. One hundred one-dimensional pre-computed queries for each kind of aggregation function are used to train the error model. The result of this experiment is shown in Figure~\ref{fig:Aggregation}. The result of the queries involving the MIN aggregation function is nearly 0 for most of the time, suggesting that, even a little absolute error becomes a high relative error. We can learn from the figure that our LAQP has better performance for most of the time.

\begin{figure}
	\centering
	\subfigure[MSE]{
        \includegraphics[width=0.4\textwidth,height=4cm]{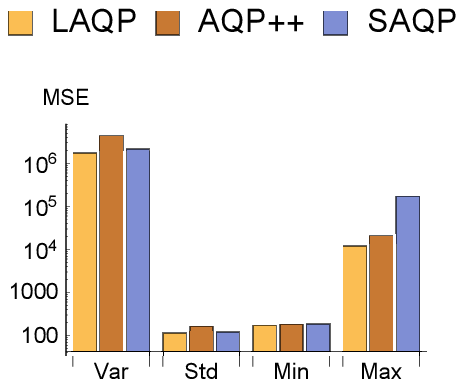}
        \centering
        \label{fig:Aggregation-MSE}
	}
    \subfigure[ARE]{
		\includegraphics[width=0.4\textwidth,height=4cm]{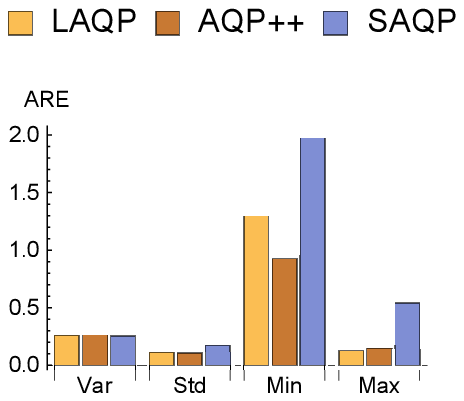}
        \centering
        \label{fig:Aggregation-ARE}
	}
	\caption{Different aggregation functions.}
	\label{fig:Aggregation}
\end{figure}

\subsection{The Benefits from the Diversification}\label{exp:diversification}
We test the improvement of LAQP by diversification. 
We use a randomly chosen query log including 200 one-dimensional queries and a diversified query log including 200 queries to train the error model, respectively. The diversification method in this experiment is the MaxMin method, which greedily inserts the query maximizing the minimum distance to the existing queries into the diversified log. This experiment was conducted on the PM2.5 dataset. The performance of these two situations are shown in Figure~\ref{fig:DiversifiedLAQP}. The LAQP with the diversified query log is called the `DiversifiedLAQP' for short in the figure. This figure indicates that the diversified query log improves the accuracy of the LAQP, which is coincident with the analysis in Section~\ref{section:diversification} .

\begin{figure}
	\centering
	\subfigure[MSE]{
        \includegraphics[width=0.4\textwidth,height=4cm]{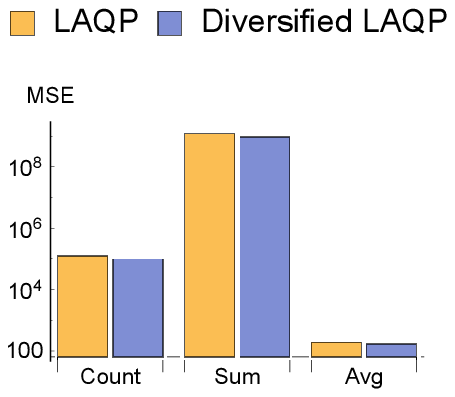}
        \label{fig:DiversifiedLAQP-MSE}
	}
    \subfigure[ARE]{
		\includegraphics[width=0.4\textwidth,height=4cm]{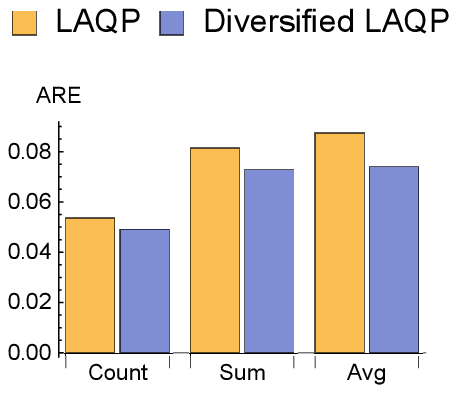}
        \label{fig:DiversifiedLAQP-ARE}
	}
	\caption{LAQP vs. DiversifiedLAQP}
	\label{fig:DiversifiedLAQP}
\end{figure}

\subsection{The Benefits from the Optimization}\label{exp:optimization}
We test the improvement of LAQP by optimization. This experiment was conducted on the PM2.5 dataset. We modify the $max\_depth$ of the RandomForest to form the error models with different reliability. We randomly choose 100 one-dimensional queries as the training set, and another 100 queries as the testing set. We can learn the influence of the weight $\alpha$ on the Object Function in Equation~\ref{equation:opt} from the Figure~\ref{fig:Optimization-alpha}. As discussed in Section~\ref{section:optimization}, the lower the result of the Object Function, the better performance on the testing set. When the error model is not well tuned ($max\_depth=1$), the result of the Object Function increases with $\alpha$. In this situation, the range-similar pre-computed query is more reliable than the error-similar one. The result of the Object Function decreases with the $\alpha$ when the error model is well tuned ($max\_depth=2$), since the error-similar pre-computed query is more reliable. We also test the performance on different kinds of aggregation functions. In this experiment, the optimization of the parameter $\alpha$ was computed by the method `bounded' in the scipy.optimize.minimize\_scalar. As shown in Figure~\ref{fig:Optimization-ARE}, optimizing the value of $\alpha$ really improves the accuracy.

\begin{figure}
	\centering
	\subfigure[Optimize $\alpha$]{
        \includegraphics[width=0.4\textwidth,height=4cm]{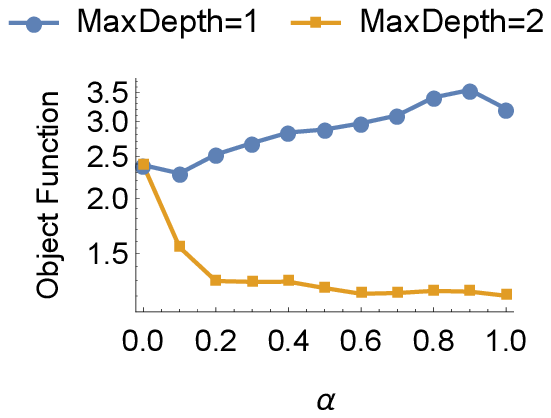}
        \label{fig:Optimization-alpha}
	}
    \subfigure[ARE]{
		\includegraphics[width=0.45\textwidth,height=4cm]{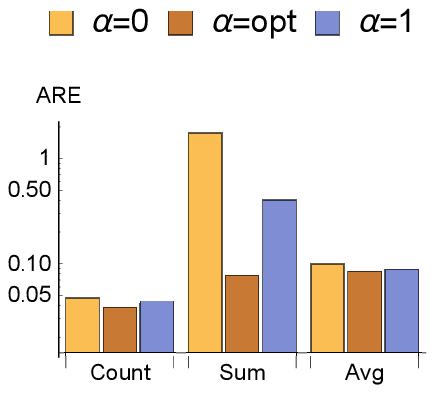}
        \label{fig:Optimization-ARE}
	}
	\caption{Optimized LAQP}
	\label{fig:Optimization}
\end{figure}

\subsection{The summary of the experiments}
From above experiments, we have following conclusions.

1. The accuracy of our LAQP outperforms the SAQP, AQP++ and DBEst when the sample are limited in a small size.

2. The superiority of LAQP to handle multi-dimensional data is evident.

3. The efficiency of LAQP is comparable to other AQP methods.

4. The performance of LAQP can be improved by diversification and optimization.

5. The LAQP performs well for most typical aggregation functions including the COUNT, SUM, AVG, VAR, STD, MIN and MAX.

\section{Conclusions}
In this work, we proposed a learning-based AQP method. We make a combination of a regression model, the sampling-based AQP and the pre-computed aggregations to provide more accurate approximate query answers. Our LAQP supports most of the typical queries supported by the sampling-based method. The performance of LAQP can be improved by involving the diversification and optimization. The experimental results indicate that our method outperforms the representative exiting methods including the sampling-based AQP method, the pre-computed aggregations based method, and the most recent learning-based method. In this work, we learn the error model based on a given query log. We will try to automatically generate the pre-computed synopses for AQP in our future study.



\bibliography{mybibfile}

\end{document}